\providecommand{\U}[1]{\protect\rule{.1in}{.1in}}
\newtheorem{theorem}{Theorem}
\newtheorem{acknowledgement}[theorem]{Acknowledgement}
\newtheorem{definition}[theorem]{Definition}
\newtheorem{lemma}[theorem]{Lemma}
\newtheorem{remark}[theorem]{Remark}
\newenvironment{proof}[1][Proof]{\noindent\textbf{#1.} }{\ \rule{0.5em}{0.5em}}
\begin{document}

\title{Quantum Polar Duality and the Symplectic Camel: a New Geometric Approach to Quantization}
\author{Maurice A. de Gosson\thanks{maurice.de.gosson@univie.ac.at}\\University of Vienna\\Faculty of Mathematics (NuHAG)}
\maketitle
\tableofcontents

\begin{abstract}
We define and study the notion of quantum polarity, which is a kind of
geometric Fourier transform between sets of positions and sets of momenta.
Extending previous work of ours, we show that the orthogonal projections of
the covariance ellipsoid of a quantum state on the configuration and momentum
spaces form what we call a dual quantum pair. We thereafter show that quantum
polarity allows solving the Pauli reconstruction problem for Gaussian
wavefunctions. The notion of quantum polarity exhibits a strong interplay
between the uncertainty principle and symplectic and convex geometry and our
approach could therefore pave the way for a geometric and topological version
of quantum indeterminacy. We relate our results to the Blaschke-Santal\'{o}
inequality and to the Mahler conjecture. We also discuss the Hardy uncertainty
principle and the less-known Donoho--Stark principle from the point of view of
quantum polarity.

\end{abstract}

\noindent\textbf{Keywords:} quantum polar duality; covariance ellipsoid;
uncertainty principle; Pauli problem; symplectic camel; symplectic capacity

\section{Introduction}

The notion of duality is omnipresent in science and philosophy, and in human
thinking \cite{Maurin}. Duality in science is usually implemented using a
transformation which serves as a dictionary for translating between two
different representations of an object. In quantum mechanics this role is
played by the Fourier transform which allows one to switch from the position
representation to the momentum representation. In this article we introduce a
new kind of duality in quantum mechanics, having its roots in convex geometry.
While the Fourier transform turns a function in $x$-space into a function in
$p$-space our duality turns a set of positions into a set of momenta: it is
thus a kind of proto-Fourier transform operating between sets, and not
functions. The definition of this duality is actually very simple and it is
therefore somewhat surprising that it hasn't been noticed or used earlier in
the literature. It goes as follows: let $X$ be a convex body in configuration
space $\mathbb{R}_{x}^{n}$; we assume that $X$ contains the origin. This set
may be, for instance, the convex closure of a cloud of position measurements
performed on some physical system located near the origin. To $X$ we associate
its \textit{polar dual} $X^{\hbar}$. It is, by definition, the set of all
points $p=(p_{1},...,p_{n})$ in momentum space $\mathbb{R}_{p}^{n}$ such that
we have
\[
p_{1}x_{1}+\cdot\cdot\cdot+p_{n}x_{n}\leq\hbar
\]
for all values $x=(x_{1},...,x_{n})$ in $X$. It turns out that the
correspondence $X\longleftrightarrow X^{\hbar}$ is in a sense a geometric
variant of the correspondence $\psi\longleftrightarrow\widehat{\psi}$ between
a wavefunction $\psi$ and its Fourier transform and thus contains the
uncertainty principle in disguise. Somewhat oversimplifying, we could say that:

\begin{quotation}
\emph{A quantum system localized in the position representation in a set} $X$
\emph{cannot be localized in the momentum representation in a set smaller than
its polar dual} $X^{\hbar}$.
\end{quotation}

The following simple example illustrates this interpretation. Consider a pure
quantum state $|\psi\rangle$ on the $x$ axis; we assume for simplicity that
the state is centered at $\langle x\rangle=\langle p\rangle=0$. That state has
a covariance matrix
\begin{equation}
\Sigma=%
\begin{pmatrix}
\sigma_{xx} & \sigma_{xp}\\
\sigma_{px} & \sigma_{pp}%
\end{pmatrix}
\text{ \ },\text{ \ }\sigma_{xp}=\sigma_{px}%
\end{equation}
where $\sigma_{xx}=\langle\widehat{x}^{2}\rangle$, $\sigma_{pp}=\langle
\widehat{p}^{2}\rangle$, and $\sigma_{xp}=\frac{1}{2}\langle\widehat{x}%
\widehat{p}+\widehat{p}\widehat{x}\rangle$. \ The determinant of $\Sigma$ is
$D=\sigma_{xx}\sigma_{pp}-\sigma_{xp}^{2}$ and in view of the uncertainty
principle in its strong form (the Robertson--Schr\"{o}dinger inequality) we
must have $D\geq\tfrac{1}{4}\hbar^{2}$. We associate with $\Sigma$ the
covariance ellipse $\Omega$: it is the set of all points $z=(x,p)$ in the
phase plane such that $\frac{1}{2}\Sigma^{-1}z\cdot z\leq1$; in the
coordinates $x,p$
\begin{equation}
\Omega:~\dfrac{\sigma_{pp}}{2D}x^{2}-\frac{\sigma_{xp}}{D}px+\dfrac
{\sigma_{xx}}{2D}p^{2}\leq1~. \label{2D}%
\end{equation}
The orthogonal projections $\Omega_{X}$ and $\Omega_{P}$ of $\Omega$ on the
$x$ and $p$ axes are the intervals
\begin{equation}
\Omega_{X}=[-\sqrt{2\sigma_{xx}},\sqrt{2\sigma_{xx}}]\text{\ \textit{\ , \ }
}\Omega_{P}=[-\sqrt{2\sigma_{pp}},\sqrt{2\sigma_{pp}}\dot{]}~.
\label{intervals}%
\end{equation}
Let $\Omega_{X}^{\hbar}$ be the polar dual of $\Omega_{X}$: it is the set of
all numbers $p$ such that%
\[
px\leq\hbar\text{ \ \textit{for} }-\sqrt{2\sigma_{xx}}\leq x\leq\sqrt
{2\sigma_{xx}}%
\]
and is thus the interval
\[
\Omega_{X}^{\hbar}=[-\hbar/\sqrt{2\sigma_{xx}},\hbar/\sqrt{2\sigma_{xx}}]~.
\]
We make the following crucial observation: since $\sigma_{xx}\sigma_{pp}%
\geq\frac{1}{2}\hbar$ we have the inclusion
\begin{equation}
\Omega_{X}^{\hbar}\subset\Omega_{P} \label{xompom}%
\end{equation}
and this inclusion reduces to the equality $\Omega_{X}^{\hbar}=\Omega_{P}$ if
and only if the Heisenberg inequality is saturated (\textit{i.e.} $\sigma
_{xx}\sigma_{pp}=\frac{1}{4}\hbar^{2}$); this corresponds to the case where
the state $|\psi\rangle$ is the minimum uncertainty Gaussian
\[
\psi_{0}(x)=\left(  \tfrac{1}{2\pi\sigma_{xx}}\right)  ^{1/4}e^{-\frac{x^{2}%
}{4\sigma_{xx}}}~.
\]
This example suggests that the uncertainty principle (UP) can be expressed
using polar duality, which is a tool from convex geometry. In fact, we will
see that it allows for a more general expression of the UP, of which the
traditional one, using variances and covariances is a particular case.

In the present work we will extend this discussion to states with arbitrary
numbers of freedoms; the approach we outline is both simple and subtle and is
closely related to open problems in geometry (the Mahler conjecture, Section
\ref{secmahl}). We will see that the notion of quantum polarity is not only
important from a foundational point of view, but also very fruitful for
solving \textquotedblleft practical\textquotedblright\ problems. For instance
we will show that it plays an essential role for the understanding and
resolution of Pauli's reconstruction problem \cite{Pauli} (Theorems \ref{Thm1}
and \ref{Thm2}). Historically, the problem goes back to the famous question
Pauli asked in \cite{Pauli}, whether the probability densities $|\psi(x)|^{2}$
and $|\widehat{\psi}(p)|^{2}$ uniquely determine the wavefunction $\psi(x)$.

On a more conceptual level, it turns out that the properties of quantum polar
duality can be reformulated in terms of a notion from symplectic topology, the
\textquotedblleft principle of the symplectic camel\textquotedblright. In
\cite{go09} we already suggested that this deep and surprising principle might
well be the \textquotedblleft tip of an iceberg\textquotedblright. Here we go
a few steps further; our analysis in \cite{go09} was based on the usual
formulation of the uncertainty principle in terms of (co-)variances of quantum
observables, which has a long story following the work of Heisenberg,
Schr\"{o}dinger, Weyl, Kennard, Robertson and many others. However, as pointed
our by several authors, standard deviations only give adequate measurements of
the spread for states that are Gaussian, or close to Gaussian states
(Hilgevoord and Uffink \cite{hi02,hiuf85bis}, Sharma \textit{et al}.
\cite{Sharma}; also see Butterfield's interesting analysis \cite{Butter}). It
seems to us that the more geometric approach outlined in the present paper
helps to avoid this pitfall. Even if some of the consequences of polar duality
can be stated in terms of covariance matrices and standard uncertainty
principles, these appear as secondary objects: the use of quantum dual pairs
liberates the UP from the traditional use of tools from classical statistics
and probability theory, such as variances and covariances.

In previous work \cite{blob} the expression of the UP in terms of covariance
matrices and ellipsoids led us to define the notion of \textquotedblleft
quantum blob\textquotedblright, the smallest unit of phase space allowed by
the UP in its traditional Robertson--Schr\"{o}dinger form. The
\textquotedblleft philosophy\textquotedblright\ behind the introduction of
quantum polar dual pairs is the following: instead of talking about
measurements and their statistical properties (which are always related to
some underlying quasi-probability distribution, we proceed purely
geometrically by associating to every convex body $X$ in position space its
quantum polar dual $X^{\hbar}$; the product $X\times X^{\hbar}$ then forms a
kind of\ phase space \textquotedblleft quantum cell\textquotedblright, always
containing\ a quantum blob, but the definition of $X\times X^{\hbar}$, as
opposed to that of quantum blobs, is independent of any particular given
state. In a sense, this new kind of \textquotedblleft
coarse-graining\textquotedblright\ may be more physical since the primary
object, $X$, is a subset of the physical space $\mathbb{R}_{x}^{n}$ which is
\textquotedblleft Fourier transformed\textquotedblright\ by polar duality into
a subset of momentum space $\mathbb{R}_{p}^{n}$ as in traditional quantum
mechanics, where one associates to a wavefunction its Fourier transform. But
there is no wavefunction here!

\paragraph*{Notation and terminology}

We denote by $\mathbb{R}_{z}^{2n}\equiv\mathbb{R}_{x}^{n}\times\mathbb{R}%
_{p}^{n}$ the phase system of a system with $n$ degrees of freedom; it comes
equipped with the standard symplectic form $\omega(z,z^{\prime})=Jz\cdot
z^{\prime}=(z^{\prime})^{T}Jz$ where
\[
J=%
\begin{pmatrix}
0_{n\times n} & I_{n\times n}\\
-I_{n\times n} & 0_{n\times n}%
\end{pmatrix}
\]
is the standard symplectic matrix. The symplectic group associated with
$\omega$ is denoted by $\operatorname*{Sp}(n)$; it consists of all linear
automorphisms $S$ of phase space such that $\omega(Sz,Sz^{\prime}%
)=\omega(z,z^{\prime})$ for all $z,z^{\prime}$ in $\mathbb{R}_{z}^{2n}$;
equivalently $S^{T}JS=SJS^{T}=J$. The metaplectic group $\operatorname*{Mp}%
(n)$ is a group of unitary operators on $L^{2}(\mathbb{R}^{n})$ which is a
double covering of $\operatorname*{Sp}(n)$: to every $S\in\operatorname*{Sp}%
(n)$ correspond two operators $\pm\widehat{S}\in\operatorname*{Mp}(n)$. We
denote by $\operatorname*{Symp}(n)$ the group of all canonical transformations
(= symplectomorphisms) of $(\mathbb{R}_{z}^{2n},\omega)$: $f\in
\operatorname*{Symp}(n)$ if and only if $f$ is a diffeomorphism of
$\mathbb{R}_{z}^{2n}$ and $f^{\ast}\omega=\omega$; equivalently $f$ is
bijective, infinitely differentiable and with infinitely differentiable
inverse, and the Jacobian matrix $Df(z)$ is symplectic for every $z$.

We will use the L\"{o}wner partial ordering of matrices \cite{loewnerorder}%
:$A\geq B$ (\textit{resp}. $A>B$) means that $A-B$ is positive semidefinite
(\textit{resp}. positive definite). When writing $A>0$ it is always understood
that $A=A^{T}$ ($A^{T}$ the transpose of $A$).

The $n$-dimensional Fourier transform $\widehat{\psi}=F\psi$ of $\psi\in
L^{2}(\mathbb{R}^{n})$ is defined for $\psi\in L^{1}(\mathbb{R}^{n})\cap
L^{2}(\mathbb{R}^{n})$ by
\begin{equation}
\widehat{\psi}(p)=\left(  \tfrac{1}{2\pi\hbar}\right)  ^{n/2}\int e^{-\frac
{i}{\hbar}px}\psi(x)d^{n}x~. \label{Fourier2}%
\end{equation}

\section{Background Material\label{secone}}

We begin by recalling the main properties of density matrices; for a detailed
rigorous review see \cite{QUANTA}. We thereafter introduce the basic notions
from harmonic analysis that we will need.

\subsection{Density matrices and their covariance ellipsoids}

We recall some material about the density matrix formalism following our
presentation in \cite{QUANTA}.

\subsubsection{Density matrices and their Wigner distributions}

Mixed quantum states will be as usual identified with their density matrices
which are convex sums of projection operators on rays $\mathbb{C}\psi_{j}$
\[
\widehat{\rho}=\sum_{j}\lambda_{j}|\psi_{j}\rangle\langle\psi_{j}|~.
\]
A quantum state $\widehat{\rho}$ on $L^{2}(\mathbb{R}_{x}^{n})$ is a positive
semidefinite $\widehat{\rho}\geq0$ (and hence self-adjoint) operator on
$L^{2}(\mathbb{R}_{x}^{n})$\ with trace $\operatorname*{Tr}(\widehat{\rho}%
)=1$. Such an operator is always compact and hence bounded. By definition the
Wigner distribution of the state $\widehat{\rho}$ is the function
$W_{\widehat{\rho}}\in L^{2}(\mathbb{R}_{z}^{2n})$ defined by
\begin{equation}
W_{\widehat{\rho}}=\sum_{j}\lambda_{j}W\psi_{j} \label{wig}%
\end{equation}
where $W\psi_{j}$ is the usual Wigner transform of $\psi_{j}$: for $\psi\in
L^{2}(\mathbb{R}^{n})$
\begin{equation}
W\psi(x,p)=\left(  \tfrac{1}{2\pi\hbar}\right)  ^{n}\int e^{-\frac{i}{\hbar
}py}\psi(x+\tfrac{1}{2}y)\psi^{\ast}(x-\tfrac{1}{2}y)d^{n}y~. \label{wigtra}%
\end{equation}
The Wigner distribution of $\widehat{\rho}$ is conventionally written in
bra-ket notation%
\begin{equation}
W_{\widehat{\rho}}(z)=\left(  \tfrac{1}{2\pi\hbar}\right)  ^{n}\int
e^{-\frac{i}{\hbar}py}\left\langle x+\tfrac{1}{2}y\right\vert \widehat{\rho
}\left\vert x-\tfrac{1}{2}y\right\rangle d^{n}y \label{robraket}%
\end{equation}
but we will not use this notation.

\subsubsection{The covariance matrix and ellipsoid}

Assuming that $W\psi_{j}\in L^{1}(\mathbb{R}^{n})\cap L^{2}(\mathbb{R}^{n})$
is $L^{2}$ normalized for each $j$ the Wigner distribution $W_{\widehat{\rho}%
}(z)$ plays the role of a quasi probability distribution on phase space; this
is illustrated by the marginal properties
\begin{align}
\int W_{\widehat{\rho}}(z)d^{n}p  &  =\sum_{j}\lambda_{j}|\psi_{j}%
(x)|^{2}\label{marg1}\\
\int W_{\widehat{\rho}}(z)d^{n}x  &  =\sum_{j}\lambda_{j}|\widehat{\psi_{j}%
}(p)|^{2}~. \label{marg2}%
\end{align}
Assuming in addition that the $W\psi_{j}$ decrease sufficiently fast at
infinity to ensure the existence of first and second moments, one defines the
covariance matrix of $\widehat{\rho}$ by
\begin{equation}
\Sigma=%
\begin{pmatrix}
\Sigma_{XX} & \Sigma_{XP}\\
\Sigma_{PX} & \Sigma_{PP}%
\end{pmatrix}
\text{ \ },\text{ \ }\Sigma_{PX}=\Sigma_{XP}^{T} \label{defcovma}%
\end{equation}
with $\Sigma_{XX}=(\sigma_{x_{j}x_{k}})_{1\leq j,k\leq n}$, $\Sigma
_{PP}=(\sigma_{p_{j}p_{k}})_{1\leq j,k\leq n}$, and $\Sigma_{XP}%
=(\sigma_{x_{j}p_{k}})_{1\leq j,k\leq n}$. Assuming for notational simplicity
that the first moments vanish%
\begin{equation}
\int x_{j}W_{\widehat{\rho}}(z)d^{2n}z=\int p_{j}W_{\widehat{\rho}}%
(z)d^{2n}z=0 \label{average}%
\end{equation}
the covariances $\sigma_{x_{j}x_{k}}$ are defined by the integrals
\begin{equation}
\sigma_{x_{j}x_{k}}=\int x_{j}x_{k}W_{\widehat{\rho}}(z)d^{2n}z \label{cojk}%
\end{equation}
and similar formulas for $\sigma_{x_{j},p_{k}}$ and $\sigma_{p_{j},p_{k}}$. In
more compact form,
\begin{equation}
\Sigma=\int zz^{T}W_{\widehat{\rho}}(z)d^{2n}z \label{coz}%
\end{equation}
where $z$, $x$ and $p$ are viewed as column vectors. A crucial fact
\cite{dutta,Birk,Narcow1,sisumu} is that the covariance matrix $\Sigma$
satisfies the \textquotedblleft quantum condition\textquotedblright%
\begin{equation}
\Sigma+\frac{i\hbar}{2}J\geq0~. \label{Quantum}%
\end{equation}
This condition implies in particular that $\Sigma>0$ \cite{Narcow1} (hence
$\Sigma$ is invertible). Condition (\ref{Quantum}) is necessary (but not
sufficient except in the Gaussian case \cite{golubis}) for the positivity
condition $\widehat{\rho}\geq0$ to hold \cite{go09,golu09}, and implies the
Robertson--Schr\"{o}dinger uncertainty principle (RSUP)%
\begin{equation}
\sigma_{x_{j}x_{j}}\sigma_{p_{j}p_{j}}\geq\sigma_{x_{j}p_{j}}^{2}+\tfrac{1}%
{4}\hbar^{2} \label{RS}%
\end{equation}
for $1\leq j\leq n$. To see this it suffices to use Sylvester's criterion for
the leading principal minors of a positive matrix, which implies that we must
have
\[%
\begin{vmatrix}
\sigma_{x_{j}x_{j}} & \sigma_{x_{j}p_{j}}+\frac{i\hbar}{2}\\
\sigma_{x_{j}p_{j}}-\frac{i\hbar}{2} & \sigma_{p_{j}p_{j}}%
\end{vmatrix}
>0
\]
which is equivalent to (\ref{RS}). Consider now the covariance ellipsoid of
$\widehat{\rho}$; it is the phase space ellipsoid%
\begin{equation}
\Omega=\{z:\frac{1}{2}\Sigma^{-1}z\cdot z\leq1\} \label{covell0}%
\end{equation}
where we are using the notation $\Sigma^{-1}z\cdot z=z^{T}\Sigma^{-1}z$. We
have proven in \cite{go09} that the conditions (\ref{Quantum}), (\ref{RS}) are
equivalent to the following statement:
\begin{equation}
\text{\textit{There exists} }S\in\operatorname*{Sp}(n)\text{ \textit{such
that} }S(\mathcal{B}^{2n}(\sqrt{\hbar}))\subset\Omega\label{camel2}%
\end{equation}
where $\mathcal{B}^{2n}(\sqrt{\hbar})$ is the phase space ball with radius
$\sqrt{\hbar}$ centered at the origin; this condition can in turn be rephrased
in terms of the topological notion of symplectic capacity (see Section
\ref{secsymp}). In \cite{blob} (also see \cite{Bull}) we have called the
minimum uncertainty ellipsoids $S(\mathcal{B}^{2n}(\sqrt{\hbar}))$
\textquotedblleft quantum blobs\textquotedblright\ hence the quantum condition
(\ref{Quantum}) amounts to saying that
\begin{equation}
\text{\textit{The covariance ellipsoid} }\Omega\text{ \textit{contains a
quantum blob.}} \label{qb}%
\end{equation}

\subsection{Symplectic and metaplectic covariance properties}

We are using Weyl's quantization scheme\ (Weyl correspondence) in this paper.
One of its hallmarks is its symplectic/metaplectic covariance properties.

\subsubsection{Symplectic covariance}

Density matrices and their Wigner distribution enjoy a covariance property
with respect to (linear) symplectic transformations. The idea is that if we
make a symplectic change of coordinates, the effect is that the corresponding
density operator will be changed by conjugation with any one of the two
associated metaplectic operators. More precisely, let us write
$W_{\widehat{\rho}}\leftrightarrows\widehat{\rho}$ the one-to-one
correspondence between Wigner distributions and the corresponding density
matrices. Then \cite{Birk,Birkbis,QUANTA,Littlejohn}, for every $S\in
\operatorname*{Sp}(n)$%
\[
W_{\widehat{\rho}}\circ S^{-1}\leftrightarrows\widehat{S}\widehat{\rho
}\widehat{S}^{-1}%
\]
where $\pm\widehat{S}$ $\in\operatorname*{Mp}(n)$ corresponds to $S$. In
particular, when $\widehat{\rho}$ describes a pure state $|\psi\rangle$ this
becomes%
\[
W\psi(S^{-1}z)=W(\widehat{S}\psi)(z)~.
\]
These formulas are actually particular cases of the general symplectic
covariance property of Weyl calculus, which plays an essential role in the
study of the symmetry properties of quantization.

The symplectic covariance property allows one to describe the action of
symplectic transformations on the covariance ellipsoid $\Omega$ in terms of
the state $\widehat{\rho}$ and its Wigner distribution $W_{\widehat{\rho}}$.
The following table summarizes these properties%
\begin{equation}%
\begin{tabular}
[c]{|l|l|l|l|}\hline
$\Omega$ & $\Sigma$ & $W_{\widehat{\rho}}$ & $\widehat{\rho}$\\\hline
$S\Omega$ & $S\Sigma S^{T}$ & $W_{\widehat{\rho}}\circ S^{-1}$ &
$\widehat{S}\widehat{\rho}\widehat{S}^{-1}$\\\hline
\end{tabular}
\ \ ~. \label{Table1}%
\end{equation}

\subsubsection{The generators of $\operatorname*{Sp}(n)$ and
$\operatorname*{Mp}(n)$\label{secgen}}

For practical purposes, let us describe a simple class of generators of
$\operatorname*{Mp}(n)$. Defining, for symmetric $P$ and invertible $L$,
\begin{equation}
V_{-P}=%
\begin{pmatrix}
I_{n\times n} & 0_{n\times n}\\
P & I_{n\times n}%
\end{pmatrix}
\text{ \ },\text{ \ }M_{L}=%
\begin{pmatrix}
L^{-1} & 0_{n\times n}\\
0_{n\times n} & L^{T}%
\end{pmatrix}
\label{vpml}%
\end{equation}
the symplectic group $\operatorname*{Sp}(n)$ is generated by the set of all
matrices $V_{-P}$ and $M_{L}$ together with the standard symplectic matrix
$J$. To these generators of $\operatorname*{Sp}(n)$ correspond the generators
$\pm\widehat{V}_{-P}$, $\pm\widehat{M}_{L,m}$, and $\pm\widehat{J}$ of the
metaplectic group, given by
\begin{equation}
\widehat{V}_{-P}\psi(x)=e^{\frac{i}{2\hbar}Px^{2}}\text{ \ },\text{
\ }\widehat{M}_{L,m}\psi(x)=i^{m}\sqrt{|\det L|}\psi(Lx) \label{vpmlhat}%
\end{equation}
where the integer $m$ corresponds to a choice of $\arg\det L$, and%
\begin{equation}
\widehat{J}\psi(x)=i^{-n/2}\widehat{\psi}(x)=\left(  \tfrac{1}{2\pi\hbar
i}\right)  ^{n/2}\int e^{-\frac{i}{\hbar}x\cdot x^{\prime}}\psi(x^{\prime
})d^{n}x^{\prime}~. \label{Jhat}%
\end{equation}

For a detailed discussion of the properties of $\operatorname*{Mp}(n)$ and its
generators see \cite{Birk,QUANTA}.

\subsection{The symplectic camel and related objects\label{secsymp}}

\subsubsection{Gromov's symplectic non-squeezing theorem}

In 1985 the mathematician M. Gromov \cite{gr85} proved the following
remarkable and highly non-trivial result: let $Z_{j}^{2n}(r)$ be the phase
space cylinder defined by $x_{j}^{2}+p_{j}^{2}\leq r^{2}$ and $\mathcal{B}%
^{2n}(R)$ the centered phase space ball with radius $R$. There exists a
canonical transformation $f$ of $\mathbb{R}_{z}^{2n}$ such that $f(\mathcal{B}%
^{2n}(R))\subset Z_{j}^{2n}(r)$ if and only $R\leq r$. This result (the
symplectic non-squeezing theorem) was reformulated by Gromov and Eliashberg
\cite{EliGro} in the following form: let $f$ be a canonical transformation of
$\mathbb{R}_{z}^{2n}$ and $\Pi_{j}$ the orthogonal projection $\mathbb{R}%
_{z}^{2n}\longrightarrow\mathbb{R}_{x_{j},p_{j}}^{2}$ on any plane of
conjugate variables $x_{j},p_{j}$. Then
\begin{equation}
\operatorname*{Area}\Pi_{j}(f(\mathcal{B}^{2n}(R)))\geq\pi R^{2}~.
\end{equation}
Of course the second result trivially implies the first, while the converse
implication follows from the fact that any planar domain of area smaller than
$\pi R^{2}$ can be mapped into a disk of the same area by an area-preserving
diffeomorphism. This result is called --- with a slight abuse of language ---
the \emph{principle of the symplectic camel}. We have used the latter in
\cite{go09} to reformulate the quantum uncertainty principle (see below),
using the related notion of \emph{symplectic capacity}. This principle
demonstrates that Gromov's theorem can be viewed as a watermark of quantum
mechanics in classical (Hamiltonian) mechanics; see the discussions in
\cite{golu09} and \cite{gohi}; in the latter \textquotedblleft the imprints of
the quantum world in classical mechanics\textquotedblright\ are discussed from
the point of view of symplectic topology. Also see the discussion in
\cite{Kalo} from the Hamiltonian point of view.

\subsubsection{Symplectic capacities}

For a detailed discussion of the notion of symplectic capacity and its
applications in physics see \cite{golu09}.

A (normalized) symplectic capacity on $(\mathbb{R}_{z}^{2n},\omega)$
associates to every subset $\Omega$ of $\mathbb{R}_{z}^{2n}$ a number
$c(\Omega)\in\mathbb{[}0,+\infty\mathbb{]}$ such that the following properties
hold \cite{ekhof1,ekhof2}:

\begin{itemize}
\item \textit{Monotonicity}: If $\Omega\subset\Omega^{\prime}$ then
$c(\Omega)\leq c(\Omega^{\prime})$;

\item \textit{Conformality}: For every real scalar $\lambda$ we have
$c(\lambda\Omega)=\lambda^{2}c(\Omega)$;

\item \textit{Symplectic invariance}: We have $c(f(\Omega))=c(\Omega)$ for
every canonical transformation $f\in\operatorname*{Symp}(n)$;

\item \textit{Normalization}: We have, for $1\leq j\leq n$,
\begin{equation}
c(\mathcal{B}^{2n}(R))=\pi R^{2}=c(Z_{j}^{2n}(R)) \label{cbz}%
\end{equation}
where $Z_{j}^{2n}(R)$ is the cylinder $\{(x,p):x_{j}^{2}+p_{j}^{2}\leq
R^{2}\}$.
\end{itemize}

Notice that the symplectic invariance of a symplectic capacity implies in
particular that
\begin{equation}
c(S(\Omega))=c(\Omega)\text{ \ if \ }S\in\operatorname*{Sp}(n)~.
\label{sympinv}%
\end{equation}

The symplectic capacities $c_{\min}$ and $c_{\max}$ are defined by%
\begin{subequations}
\begin{align}
c_{\min}(\Omega)  &  =\sup_{f\in\operatorname*{Symp}(n)}\{\pi R^{2}%
:f(\mathcal{B}^{2n}(R))\subset\Omega\}\label{cmin}\\
c_{\max}(\Omega)  &  =\inf_{f\in\operatorname*{Symp}(n)}\{\pi R^{2}%
:f(\Omega)\subset Z_{j}^{2n}(R)~. \label{cmax}%
\end{align}
That $c_{\min}$ and $c_{\max}$ indeed are symplectic capacities follows from
Gromov's symplectic non-squeezing theorem \cite{gr85}. Some terminology:
$c_{\min}$ is called the \textquotedblleft Gromov width\textquotedblright%
\ while $c_{\max}$ is the \textquotedblleft cylindrical
capacity\textquotedblright. This is because $c_{\max}$ measures the area of
the base of the smallest cylinder into which a subset of phase space can be
symplectically embedded. The notation $c_{\min}$ and $c_{\max}$ is motivated
by the fact that they are the smallest (\textit{resp.} the largest) symplectic
capacities: every symplectic capacity $c$ on $(\mathbb{R}_{z}^{2n},\omega)$
satisfies
\end{subequations}
\begin{equation}
c_{\min}(\Omega)\leq c(\Omega)\leq c_{\max}(\Omega) \label{cminmax}%
\end{equation}
for all $\Omega\subset\mathbb{R}_{z}^{2n}$.

It should be observed that for $n>1$ symplectic capacities are not related to
the notion of volume; the symplectic capacity of a set can be finite while
having infinite volume (this is the case of the cylinders $Z_{j}^{2n}(R)$).
Heuristically one can view a symplectic capacity as a generalization of the
notion of area, or (equivalently) of that of action. For instance, it is
possible to show that a particular symplectic capacity (the Hofer--Zehnder
capacity \cite{Polter}) of a compact convex set $\Omega$ with smooth boundary
$\partial\Omega$ is the action integral $\int_{\gamma}pdx$ calculated along
the shortest periodic orbit $\gamma$ carried by $\partial\Omega$
(\textquotedblleft Hofer--Zehnder capacity\textquotedblright).

\subsubsection{The symplectic capacity of an ellipsoid}

In what follows we use the notation $Mz^{2}=Mz\cdot z=z^{T}Mz$ ($M$ a square
matrix); $M>0$ means that $M$ is symmetric: $M=M^{T}$ and positive
definite,\textit{ i.e}. $Mz^{2}>0$ for all $z\neq0$. A remarkable property is
that all symplectic capacities agree on ellipsoids: if
\[
\Omega=\{z\in\mathbb{R}_{z}^{2n}:Mz^{2}\leq R^{2}\}
\]
where $M>0$, then for every symplectic capacity $c$ on $(\mathbb{R}_{z}%
^{2n},\omega)$ we have
\begin{equation}
c(\Omega)=\pi R^{2}/\nu_{\max} \label{cell}%
\end{equation}
where $\nu_{\max}$ is the largest symplectic eigenvalue of $M$. (Recall that
the symplectic eigenvalues $\nu_{1},...,\nu_{n}$ of $M$ are the numbers
$\nu_{j}>0$ defined by the condition \textquotedblleft\ $\pm i\nu_{j}$ is an
eigenvalue of $JM$\textquotedblright.) This property allowed us to prove in
\cite{go09} that the RSUP is equivalent to the inequality%
\begin{equation}
c(\Omega)\geq\pi\hbar\label{foop}%
\end{equation}
when $\Omega$ is a quantum covariance ellipsoid. From this formula the
symplectic invariance of the RSUP becomes obvious since we have $c(S(\Omega
))=c(\Omega)$ for every $S\in\operatorname*{Sp}(n)$.

\section{Quantum Dual Pairs and Covariance Ellipsoids}

\subsection{Quantum polar duality}

Polar duality is a very useful mathematical tool in convex and asymptotic
geometry, and in functional analysis. It has also recently been used by
Kalogeropoulos \cite{Kalo} to discuss phase space coarse-graining.

\subsubsection{Polar duality in convex geometry}

Let $X$ be a convex body in configuration space $\mathbb{R}_{x}^{n}$ (a convex
body in an Euclidean space is a compact convex set with non-empty interior).
We assume in addition that $X$ contains $0$ in its interior. This is the case
if, for instance, $X$ is symmetric: $X=-X$.

\begin{definition}
The \emph{polar dual} of $X$ is the subset
\begin{equation}
X^{\hbar}=\{p\in\mathbb{R}_{p}^{n}:px\leq\hbar\text{ \textit{for all} }x\in
X\} \label{omo}%
\end{equation}
of the dual space $\mathbb{R}_{p}^{n}\equiv(\mathbb{R}_{x}^{n})^{\ast}$.
\end{definition}

Notice that it trivially follows from the definition that $X^{\hbar}$ is
convex. In the mathematical literature one usually chooses $\hbar=1$, in which
case one writes $X^{o}$ for the polar dual; we have $X^{\hbar}=\hbar X^{o}$.
Here is an intuitive interpretation of the polar dual: $X$ being convex it is
the intersection of a (possibly infinite) family of half spaces (the
\textquotedblleft supporting hyperplanes\textquotedblright\ of $X$).
Therefore, the polar of $X$ can be seen as the convex hull of a (possibly
infinite) set of points, coming from all of the supporting hyperplanes.

The following properties of the polar dual are obvious:
\begin{equation}
\text{\textit{Biduality}: }(X^{\hbar})^{\hbar}=X~; \label{biduality}%
\end{equation}%
\begin{equation}
\text{\textit{Antimonotonicity: }}X\subset Y\Longrightarrow Y^{\hbar}\subset
X^{\hbar}~; \label{antimonotonicity}%
\end{equation}%
\begin{equation}
\text{\textit{Scaling}: }\det L\neq0\Longrightarrow(LX)^{\hbar}=(L^{T}%
)^{-1}X^{\hbar}~. \label{scaling}%
\end{equation}

The \textquotedblleft smaller\textquotedblright\ $X$ is, the larger $X^{\hbar
}$ is. For instance, if $X=0$ (corresponding to a perfectly localized system)
then $X^{\hbar}=\mathbb{R}_{p}^{n}$, the whole momentum space. This property,
reminiscent of the uncertainty principle, and of the duality of the support of
a function and that of its Fourier transform, becomes particularly visible
when one studies the polar duals of ellipsoids. Here are a few useful results:

\begin{lemma}
Let $\mathcal{B}_{X}^{n}(R)$ (\textit{resp}. $\mathcal{B}_{P}^{n}(R)$) be the
ball $\{x:|x|\leq R\}$ in $\mathbb{R}_{x}^{n}$ (\textit{resp}. $\{p:|p|\leq
R\}$ in $\mathbb{R}_{p}^{n}$). (i) We have
\begin{equation}
\mathcal{B}_{X}^{n}(R)^{\hbar}=\mathcal{B}_{P}^{n}(\hbar/R)~. \label{BhR}%
\end{equation}
In particular
\begin{equation}
\mathcal{B}_{X}^{n}(\sqrt{\hbar})^{\hbar}=\mathcal{B}_{P}^{n}(\sqrt{\hbar})~.
\label{bhh}%
\end{equation}
(ii) Let $A=A^{T}$ be an invertible $n\times n$ matrix. We have
\begin{equation}
\{x:Ax^{2}\leq R^{2}\}^{\hbar}=\{p:A^{-1}p^{2}\leq(\hbar/R)^{2}\}
\label{dualell}%
\end{equation}
and hence%
\begin{equation}
\{x:Ax^{2}\leq\hbar\}^{\hbar}=\{p:A^{-1}p^{2}\leq\hbar\}~. \label{dualellh}%
\end{equation}

\end{lemma}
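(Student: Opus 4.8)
The plan is to prove the two parts by direct computation, reducing (i) to a one-line application of the Cauchy--Schwarz inequality and reducing (ii) to (i) via a linear change of variables, using the scaling property \eqref{scaling}.

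First I would prove (i). To show $\mathcal{B}_{X}^{n}(R)^{\hbar}\subseteq\mathcal{B}_{P}^{n}(\hbar/R)$, take $p$ with $p\cdot x\leq\hbar$ for all $x$ with $|x|\leq R$. If $p\neq 0$, choose $x=Rp/|p|$, which lies in $\mathcal{B}_{X}^{n}(R)$; then $R|p|=p\cdot x\leq\hbar$, so $|p|\leq\hbar/R$. The case $p=0$ is trivial. Conversely, if $|p|\leq\hbar/R$ then for every $x$ with $|x|\leq R$ the Cauchy--Schwarz inequality gives $p\cdot x\leq|p|\,|x|\leq(\hbar/R)R=\hbar$, so $p\in\mathcal{B}_{X}^{n}(R)^{\hbar}$. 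This establishes \eqref{BhR}, and \eqref{bhh} is the special case $R=\sqrt{\hbar}$.

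Next I would deduce (ii). Since $A=A^{T}>0$ or $A=A^{T}<0$ would both need handling, but the statement only assumes $A$ symmetric and invertible; note, however, that $\{x:Ax^{2}\leq R^{2}\}$ is a convex body containing $0$ in its interior only when $A>0$, so I will treat that case (the relevant one for covariance ellipsoids). Write $A=L^{T}L$ with $L$ invertible (e.g.\ $L=A^{1/2}$). Then $Ax^{2}=|Lx|^{2}$, so $\{x:Ax^{2}\leq R^{2}\}=L^{-1}\mathcal{B}_{X}^{n}(R)$. Applying the scaling property \eqref{scaling} with the matrix $L^{-1}$ in place of $L$, together with part (i), gives
\[
\{x:Ax^{2}\leq R^{2}\}^{\hbar}=(L^{-1}\mathcal{B}_{X}^{n}(R))^{\hbar}=L^{T}\bigl(\mathcal{B}_{X}^{n}(R)^{\hbar}\bigr)=L^{T}\mathcal{B}_{P}^{n}(\hbar/R).
\]
Finally $L^{T}\mathcal{B}_{P}^{n}(\hbar/R)=\{p:|(L^{T})^{-1}p|\leq\hbar/R\}=\{p:(L^{-1}(L^{T})^{-1})p^{2}\leq(\hbar/R)^{2}\}=\{p:A^{-1}p^{2}\leq(\hbar/R)^{2}\}$, since $(L^{T})^{-1}L^{-1}=(LL^{T})^{-1}$ and for symmetric $A=L^{T}L$ one has $A^{-1}=L^{-1}(L^{T})^{-1}$; a symmetric choice such as $L=A^{1/2}$ makes $L$ symmetric and this identity immediate. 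This proves \eqref{dualell}, and \eqref{dualellh} follows by taking $R^{2}=\hbar$.

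I do not anticipate a genuine obstacle here; the only point requiring care is bookkeeping with transposes in the scaling law \eqref{scaling} and making sure the chosen factorization $A=L^{T}L$ is consistent throughout — picking $L=A^{1/2}$ symmetric removes any ambiguity. One should also remark, for completeness, that the hypotheses implicitly force $A>0$ for $\{x:Ax^{2}\leq R^{2}\}$ to be a bounded convex body with $0$ in its interior, so the construction $A=L^{T}L$ with $L$ invertible is always available.
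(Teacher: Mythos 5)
Your proof is correct and follows essentially the same route as the paper: the extremal choice $x=Rp/|p|$ plus Cauchy--Schwarz for (i), and for (ii) writing the ellipsoid as the image of the ball under $A^{-1/2}$ (your $L^{-1}$ with $L=A^{1/2}$) and invoking the scaling property \eqref{scaling} together with \eqref{BhR}. Your added remarks (the trivial case $p=0$ and the observation that the statement implicitly requires $A>0$) are sensible tidy-ups of points the paper leaves tacit.
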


\begin{proof}
Let us show that $\mathcal{B}_{X}^{n}(R)^{\hbar}\subset\mathcal{B}_{P}%
^{n}(\hbar/R)$. Let $p\in\mathcal{B}_{X}^{n}(R)^{\hbar}$ and set $x=(R/|p|)p$;
we have $|x|=R$ and hence $px\leq\hbar$, that is $R|p|\leq\hbar$ and
$p\in\mathcal{B}_{P}^{n}(\hbar/R)$. To prove the opposite inclusion choose
$p\in\mathcal{B}_{P}^{n}(\hbar/R)$. We have $|p|\leq\hbar/R$ and hence, by the
Cauchy--Schwarz inequality, $px\leq|x||p|\leq\hbar|x|/R$, that is
$px\leq\hslash$ for all $x$ such that $|x|\leq R$; this means that
$p\in\mathcal{B}_{X}^{n}(R)^{\hbar}$. \textit{(ii)} The ellipsoid
$\{x:Ax^{2}\leq R^{2}\}^{\hbar}$ is the image of $\mathcal{B}_{X}^{n}(R)$ by
the automorphism $A^{-1/2}$; in view of formula (\ref{dualell}) it follows
from the scaling property (\ref{scaling}) and (\ref{BhR}) that
\[
\{x:Ax^{2}\leq R^{2}\}^{\hbar}=A^{1/2}\mathcal{B}_{X}^{n}(R)^{\hslash}%
=A^{1/2}\mathcal{B}_{P}^{n}(\hbar/R)
\]
which is equivalent to (\ref{dualell}).
\end{proof}

So far we have assumed that the convex body $X$ contains the origin $0$ in its
interior. The definitions and results listed above extend without difficulty
to the general case by picking an arbitrary $x_{0}\in X$ and replacing $X$
with $X_{0}=-x_{0}+X$.

\subsubsection{Quantum dual pairs}

The following definition will be motivated by Theorem \ref{Thm1} below:

\begin{definition}
A pair $(X,P)\ $of symmetric convex bodies $X\subset\mathbb{R}_{x}^{n}$ and
$P\subset\mathbb{R}_{p}^{n}$ is called a \textquotedblleft quantum dual
pair\textquotedblright\ (or, for short, \textquotedblleft dual
pair\textquotedblright) if we have $X^{\hbar}\subset P$ or, equivalently,
$P^{\hbar}\subset X$. When equality occurs, that is if $X^{\hbar}=P$ we say
that the dual pair $(X,P)$ is saturated.
\end{definition}

\begin{remark}
We want to make the reader aware that we are committing a slight abuse of
notation and terminology here, but this abuse helps us to avoid cluttering
notation and making statements unnecessarily complicated. Rigorously speaking,
the set $X$ is a subset the configuration space $\mathbb{R}_{x}^{n}$ of a
system, while $P$ is a subset of the momentum space $\mathbb{R}_{p}^{n}$ of
that system, which is algebraically and topologically the dual of
$\mathbb{R}_{x}^{n}$. This amounts to identify the phase space of the system
with the cotangent bundle $T^{\ast}\mathbb{R}_{x}^{n}=\mathbb{R}_{x}^{n}%
\times(\mathbb{R}_{x}^{n})^{\ast}$. However, since we will be working in
\textquotedblleft flat\textquotedblright\ configuration space we are
identifying $\mathbb{R}_{p}^{n}$ with a copy of $\mathbb{R}_{x}^{n}$ and the
phase space with the product $\mathbb{R}_{x}^{n}\times\mathbb{R}_{p}^{n}%
\equiv\mathbb{R}_{x,p}^{2n}\equiv\mathbb{R}_{z}^{2n}$.
\end{remark}

Here are two elementary but important properties of quantum pairs:%

\begin{gather}
\text{\textit{Let} }(X,P)\text{ \textit{be a quantum dual pair and} }Y,Q\text{
\textit{be symmetric} \textit{convex }}\label{XY}\\
\text{\textit{bodies such that }}X\subset Y\text{ and }P\subset
Q\text{\textit{. Then} }(Y,Q)\text{ \textit{is also a quantum dual pair.}%
}\nonumber
\end{gather}

\noindent This follows from the antimonotonicity of the passage to the dual
where we have the chain of inclusions $Y^{\hbar}\subset X^{\hbar}\subset
P\subset Q$;%

\begin{gather}
\text{\textit{Two ellipsoids} }X=\{x:Ax^{2}\leq\hbar\}\text{ \textit{and }%
}P=\{p:Bp^{2}\leq\hbar\}\text{ }\label{Dual}\\
\text{\textit{form a quantum dual pair if and only if }}AB\leq I_{n\times
n}\nonumber\\
\text{ \textit{and} \textit{we have the equality}.}X^{\hbar}=P\text{
\textit{if and only if} }AB=I_{n\times n}~.\nonumber
\end{gather}

\noindent This follows from the slightly more general statement: if
$X=\{x:Ax^{2}\leq R^{2}\}$ and $P=\{p:Bp^{2}\leq R^{\prime2}\}$ then $(X,P)$
is a quantum dual pair if and only if $AB\leq(\hbar^{-1}R^{\prime}R)^{2}$. In
view of the duality formula (\ref{dualell}) we have
\[
X^{\hbar}=\{p:A^{-1}p^{2}\leq(\hbar/R)^{2}\}
\]
and we thus have $X^{\hbar}\subset P$ if and only if $R^{2}\hbar^{-2}%
A^{-1}x^{2}\leq1$ implies $(R^{\prime})^{-2}Bx^{2}\leq1$. But this condition
is in turn equivalent to $R^{2}\hbar^{-2}A^{-1}\geq(R^{\prime})^{-2}B$, that
is $AB\leq(\hbar^{-1}R^{\prime}R)^{2}$. We have used here the following
property of the L\"{o}wner ordering: if $K$ and $L$ are positive definite
symmetric matrices such that $K^{-1}\geq L$ then $KL\leq I_{n\times n}$ (and
conversely): $K^{-1}\geq L$ is equivalent to $K^{1/2}LK^{1/2}\leq I_{n\times
n}$ and $K^{1/2}LK^{1/2}$ and $KL$ have the same eigenvalues.

\subsubsection{Polar duality and Lagrangian planes}

We have defined polar duality in terms of the subspaces $\mathbb{R}_{x}%
^{n}\equiv\mathbb{R}_{x}^{n}\times0$ and $\mathbb{R}_{p}^{n}\equiv
0\times\mathbb{R}_{p}^{n}$ of the phase space $\mathbb{R}_{z}^{2n}$. These
subspaces have the property that the symplectic form $\omega$ vanishes
identically on them: $\omega(x,0;x^{\prime},0)=0$ and $\omega(0,p;0.p^{\prime
})=0$ for all $x^{\prime}$ and $p^{\prime}$. Any $n$-dimensional subspace
$\ell$ of $\mathbb{R}_{z}^{2n}$ on which $\omega$ is identically zero is
called a \textit{Lagrangian plane }in the symplectic literature \cite{Birk}.
The set of all Lagrangian planes in $(\mathbb{R}_{z}^{2n},\omega)$ is denoted
by $\operatorname*{Lag}(n)$ (or sometimes $\Lambda(n)$) and is called the
\textit{Lagrangian Grassmannian} of\textit{ }$(\mathbb{R}_{z}^{2n},\omega)$.
One can show that $\operatorname*{Lag}(n)$ can be identified with the
homogeneous space $U(n)/O(n)$ and equipped with its natural topology. Since
the image of a Lagrangian plane by a symplectic transformation obviously also
is a Lagrangian plane, it follows that we have a continuous and transitive
action%
\[
\operatorname*{Sp}(n)\times\operatorname*{Lag}(n)\longrightarrow
\operatorname*{Lag}(n)~.
\]

It turns out that we can define the notion of polarity for any pair
$(\ell,\ell^{\prime})$ of transversal Lagrangian planes, that is, $\ell
\cap\ell^{\prime}=0$. To see this we begin by making the following remark
which relates the notion of polarity to the symplectic structure: let $X$ be,
as before, a convex body in $\mathbb{R}_{x}^{n}\equiv\mathbb{R}_{x}^{n}%
\times0$ containing the origin. We observe that the polar dual $X^{\hbar}$ is
the subset of $\mathbb{R}_{p}^{n}\equiv0\times\mathbb{R}_{p}^{n}$ defined by
the condition
\[
z\in X^{\hbar}\Longleftrightarrow\omega(z,z^{\prime})\leq\hbar\text{ for all
}z^{\prime}\in X~\text{.}%
\]
Indeed, since $z=(0,p)$ and $z^{\prime}=(x^{\prime},0)$ for some $p$ and
$x^{\prime}$ we have $\omega(z,z^{\prime})=\omega((0,p;x^{\prime
},0)=px^{\prime}$whence $z\in X^{\hbar}$ means that $px^{\prime}\leq\hbar$,
which is the usual definition (\ref{omo}) of the polar dual. This observation
motivates the following definition:

\begin{definition}
\label{defl1}Let $(\ell,\ell^{\prime})$ be a pair of transversal Lagrangian
planes, and $X_{\ell}$ a convex body containing the origin in $\ell$. The
polar dual $X_{\ell^{\prime}}^{\hbar}$ of $X_{\ell}$ is the subset of
$\ell^{\prime}$ consisting of all $z^{\prime}\in\ell^{\prime}$ such that
\begin{equation}
\omega(z,z^{\prime})\leq\hbar\text{ \ for all \ }z\in X_{\ell}~. \label{ozz}%
\end{equation}

\end{definition}

This\ definition and the discussion that precedes it show that the notion of
polar dual is actually of a deep symplectic nature. It turns out that using
the properties of Lagrangian planes (see \cite{Birk} for a detailed study) one
can give the following equivalent definition of polar duality:

\begin{definition}
\label{defl2}Let $(\ell,\ell^{\prime})$ be a pair of transversal Lagrangian
planes, and set $\ell_{X}=\mathbb{R}_{x}^{n}\times0$, $\ell_{P}=0\times
\mathbb{R}_{p}^{n}$. Let $S\in\operatorname*{Sp}(n)$ be such that $\ell
=S\ell_{X}$ and $\ell^{\prime}=S\ell_{P}$. and $X_{\ell}\subset\ell$ a convex
body containing the origin. The polar dual $X_{\ell^{\prime}}^{\hbar}%
\subset\ell^{\prime}$ of $X_{\ell}$ is defined by
\begin{equation}
X_{\ell^{\prime}}^{\hbar}=S(S^{-1}X_{\ell})^{\hbar} \label{sxl}%
\end{equation}
\ where $(S^{-1}X_{\ell})^{\hbar}$ is the polar dual of $S^{-1}X_{\ell}%
\subset\ell_{X}$.
\end{definition}

The key point in the proof of the equivalence of both definitions lies in the
following property of the Lagrangian Grassmannian $\operatorname*{Lag}(n)$:
the symplectic group $\operatorname*{Sp}(n)$ acts transitively on pairs of
transverse Lagrangian planes; \textit{i.e}. if $(\ell_{1},\ell_{1}^{\prime})$
and $(\ell_{2},\ell_{2}^{\prime})$ are pairs of elements of
$\operatorname*{Lag}(n)$ such that $\ell_{1}\cap\ell_{1}^{\prime}=\ell_{2}%
\cap\ell_{2}^{\prime}=0$ then there exists exactly one $S\in\operatorname*{Sp}%
(n)$ such that $(\ell_{1},\ell_{1}^{\prime})=S(\ell_{2},\ell_{2}^{\prime
})=(S\ell_{2},S\ell_{2}^{\prime})$. Specializing to the case $(\ell_{1}%
,\ell_{1}^{\prime})=(\ell,\ell^{\prime})$ and $(\ell_{2},\ell_{2}^{\prime
})=(\ell_{X},\ell_{P})$ the existence of $S\in\operatorname*{Sp}(n)$ in
Definition \ref{defl2} follows. There remains to show that conditions
(\ref{ozz}) and (\ref{sxl}) are indeed equivalent. Let $z^{\prime}\in
X_{\ell^{\prime}}^{\hbar}$, equivalently $S^{-1}z^{\prime}=(S^{-1}X_{\ell
})^{\hbar}\in\ell_{P}$ which means that $\omega(S^{-1}z,S^{-1}z^{\prime}%
)\leq\hbar$ for all $S^{-1}z\in\ell_{X}$. Since $\omega(S^{-1}z,S^{-1}%
z^{\prime})=\omega(z,z^{\prime})$ we thus have $\omega(z,z^{\prime})\leq\hbar$
for all $z\in X_{\ell}$ which means that $X_{\ell^{\prime}}^{\hbar}$ is the
polar dual of $X_{\ell}$ by definition (\ref{ozz}). Reversing the argument we
see that, conversely, (\ref{ozz}) implies (\ref{sxl}).

The properties of this generalized notion of polar duality are similar to
those in the standard case. For instance, we have the biduality property
\[
(X_{\ell^{\prime}}^{\hbar})_{\ell}^{\hbar}=X_{\ell}%
\]
which readily follows from definition (\ref{sxl}), swapping the roles of
$\ell$ and $\ell^{\prime}$:
\begin{align*}
(X_{\ell^{\prime}}^{\hbar})_{\ell}^{\hbar}  &  =S[S^{-1}X_{\ell^{\prime}%
}^{\hbar}]^{\hbar}=S[S^{-1}S(S^{-1}X_{\ell})^{\hbar}]^{\hbar}\\
&  =S[(S^{-1}X_{\ell})^{\hbar}]^{\hbar}=X_{\ell}~.
\end{align*}
Similarly, the antimonotonicity property (\ref{antimonotonicity}) becomes
\begin{equation}
X_{\ell}\subset Y_{\ell}\Longrightarrow Y_{\ell^{\prime}}^{\hbar}\subset
X_{\ell^{\prime}}^{\hbar}~. \label{antibis}%
\end{equation}
Let us see what happens with the scaling property (\ref{scaling}); what
follows will shed more light on its meaning which was hidden because of our
identification of $\ell_{X}=\mathbb{R}_{x}^{n}\times0$ with $\mathbb{R}%
_{x}^{n}$. Formula (\ref{scaling}) says that for every automorphism $L$ of
$\mathbb{R}_{x}^{n}$ (\textit{i.e.} an invertible $n\times n$ matrix) we have
$(LX)^{\hbar}=(L^{T})^{-1}X^{\hbar}$. However, if we view $X$ as a subset of
$\mathbb{R}_{x}^{n}\times0$ it must be acted upon by automorphisms of
$\mathbb{R}_{z}^{2n}$ (\textit{i.e.} $2n\times2n$ matrices). Using the
symplectic matrix $M_{L^{-1}}=%
\begin{pmatrix}
L & 0\\
0 & (L^{-1})^{T}%
\end{pmatrix}
$ we can view $LX$ as $M_{L^{-1}}X$ when $X\subset\ell_{X}=\mathbb{R}_{x}%
^{n}\times0$. Similarly, $(L^{T})^{-1}X^{\hbar}$ can be viewed as $M_{L^{-1}%
}X^{\hbar}$ when $X^{\hbar}\subset\ell_{P}=0\times\mathbb{R}_{p}^{n}$. With
this notation the scaling formula now reads
\[
(M_{L^{-1}}X)^{\hbar}=M_{L^{-1}}X^{\hbar}~.
\]
It turns out that this is a particular case of the following general formulas:
for every $S_{0}\in\operatorname*{Sp}(n)$ and $X_{\ell}\subset\ell$ we have
\begin{equation}
(S_{0}X_{\ell})_{S_{0}\ell^{\prime}}^{\hbar}=S_{0}X_{\ell}^{\hbar} \label{sox}%
\end{equation}
as follows from the observation that $S_{0}$ takes $\ell$ to $S_{0}\ell$ and
$\ell^{\prime}$ to $S_{0}\ell^{\prime}$.

The notion of dual quantum pair is defined accordingly: it is a pair
$(X_{\ell},P_{\ell^{\prime}})$ with $X_{\ell}\subset\ell$ and $P_{\ell
^{\prime}}\subset\ell^{\prime}$ such that $X_{\ell}^{\hbar}\subset
P_{\ell^{\prime}}$. Formula (\ref{sox}) shows that a symplectic transformation
$S_{0}\in\operatorname*{Sp}(n)$ takes a dual quantum pair $(X_{\ell}%
,P_{\ell^{\prime}})$ into the dual quantum pair $(S_{0}X_{\ell},S_{0}%
P_{\ell^{\prime}})$ since $S_{0}P_{\ell^{\prime}}\supset S_{0}X_{\ell}^{\hbar
}=(S_{0}X_{\ell})_{S_{0}\ell^{\prime}}^{\hbar}$. This is a geometric
generalization of the symplectic invariance of the Robertson--Schr\"{o}dinger
inequalities \ref{RS}.

\subsubsection{Polar duality and symplectic capacity}

Let us specialize the properties above to the case of ellipsoids. We first
recall the following symplectic result (see \cite{golu09}, \cite{Birkbis},
\S 6.2.1, or \cite{ACHA}, Lemma 6):

\begin{lemma}
\label{ABL}Let $A$ and $B$ be two real positive definite symmetric $n\times n$
matrices. There exists an invertible real $n\times n$ matrix $L$ such that
\begin{equation}
L^{T}AL=L^{-1}B(L^{T})^{-1}=\Lambda\label{lalb}%
\end{equation}
where $\Lambda=\operatorname*{diag}(\sqrt{\lambda_{1}},...,\sqrt{\lambda_{n}%
})$ the $\lambda_{j}>0$ being the eigenvalues of $AB$.
\end{lemma}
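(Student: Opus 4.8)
The plan is to reduce the claim to a simultaneous-diagonalization argument for two positive definite symmetric matrices, carried out in two stages. First I would symmetrize $A$: since $A>0$ there is a unique $A^{1/2}>0$, and I would pass to the conjugated matrix $C=A^{1/2}BA^{1/2}$, which is again positive definite symmetric. By the spectral theorem there is an orthogonal matrix $O$ (so $O^{T}=O^{-1}$) such that $O^{T}CO=D$ is diagonal with positive entries; moreover the eigenvalues of $C=A^{1/2}BA^{1/2}$ coincide with the eigenvalues of $AB$ (conjugate matrices $XY$ and $YX$ have the same spectrum, here with $X=A^{1/2}$, $Y=BA^{1/2}$), so the diagonal entries of $D$ are exactly the $\lambda_{j}>0$. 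Writing $D=\Lambda^{2}$ with $\Lambda=\operatorname{diag}(\sqrt{\lambda_{1}},\dots,\sqrt{\lambda_{n}})$, we then have $O^{T}A^{1/2}BA^{1/2}O=\Lambda^{2}$.

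Next I would make the ansatz $L=A^{-1/2}O\Lambda^{1/2}$ and verify the two identities in (\ref{lalb}) by direct substitution. For the first one, $L^{T}AL=\Lambda^{1/2}O^{T}A^{-1/2}\,A\,A^{-1/2}O\Lambda^{1/2}=\Lambda^{1/2}O^{T}O\Lambda^{1/2}=\Lambda^{1/2}\Lambda^{1/2}=\Lambda$, using $A^{-1/2}AA^{-1/2}=I$ and the orthogonality of $O$. For the second one I would first note $L^{-1}=\Lambda^{-1/2}O^{T}A^{1/2}$ and $(L^{T})^{-1}=A^{1/2}O\Lambda^{-1/2}$, so that $L^{-1}B(L^{T})^{-1}=\Lambda^{-1/2}O^{T}(A^{1/2}BA^{1/2})O\Lambda^{-1/2}=\Lambda^{-1/2}\Lambda^{2}\Lambda^{-1/2}=\Lambda$, where in the middle step I invoked $O^{T}A^{1/2}BA^{1/2}O=\Lambda^{2}$ from the first stage. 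Finally $L$ is invertible because $A^{-1/2}$, $O$, and $\Lambda^{1/2}$ all are. This establishes (\ref{lalb}) with $\Lambda$ as claimed.

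The only mildly delicate point — and the place I would be most careful — is the identification of the diagonal entries of $\Lambda^{2}$ with the eigenvalues of $AB$ rather than, say, of $A^{1/2}BA^{1/2}$ written in some other order; here one simply cites the standard fact that for square matrices $XY$ and $YX$ have the same characteristic polynomial, applied with $X=A^{1/2}$ and $Y=BA^{1/2}$, which gives $\operatorname{spec}(A^{1/2}BA^{1/2})=\operatorname{spec}(AB)$, all eigenvalues positive since $A^{1/2}BA^{1/2}>0$. Everything else is routine matrix algebra, and the statement does not require $L$ to be symmetric or orthogonal, so no positivity or symmetry of $L$ needs to be checked — only invertibility, which is immediate from the factored form. (One could equally well quote this as the classical "simultaneous reduction of two quadratic forms, one definite," but I prefer to exhibit $L$ explicitly since the later applications in the paper use the matrix, not just its existence.)
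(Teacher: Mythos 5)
Your proof is correct and complete. Note, however, that the paper does not actually prove Lemma \ref{ABL} in the text: it recalls it from the literature (\cite{golu09}, \cite{Birkbis} \S 6.2.1, \cite{ACHA}, Lemma 6) and frames it as a special case of Williamson's symplectic diagonalization via the factorization (\ref{willab}). Your explicit construction $L=A^{-1/2}O\Lambda^{1/2}$, with $O$ orthogonally diagonalizing $A^{1/2}BA^{1/2}$ and $\Lambda^{2}=O^{T}A^{1/2}BA^{1/2}O$, is precisely the standard argument behind those citations; indeed the paper's own remark following the lemma, that the eigenvalues of $AB$ coincide with those of $A^{1/2}BA^{1/2}$, is exactly the identification you verify. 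What your write-up buys is self-containedness and an explicit formula for $L$, which is genuinely useful since Theorem \ref{Theorem1} and Theorem \ref{Thm3} use the matrix $L$ itself and not merely its existence. One small slip worth tidying: with your choice $X=A^{1/2}$, $Y=BA^{1/2}$ the product $YX$ equals $BA$, not $AB$, so the cited fact gives $\operatorname{spec}(A^{1/2}BA^{1/2})=\operatorname{spec}(BA)$; this is harmless because $\operatorname{spec}(BA)=\operatorname{spec}(AB)$ (same characteristic polynomial, e.g. by transposition or by one more application of the same fact), or you can take $X=A^{1/2}B$, $Y=A^{1/2}$ to land on $AB$ directly. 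The ordering issue you flag is indeed immaterial, since the lemma fixes no enumeration of the $\lambda_{j}$.
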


Observe that the eigenvalues $\lambda_{j}$ of $AB$ are $>0$ since they are the
same as those of $A^{1/2}BA^{1/2}$. This result may be viewed as a special
case of Williamson's symplectic diagonalization result; we can in fact rewrite
(\ref{lalb}) as
\begin{equation}%
\begin{pmatrix}
A & 0\\
0 & B
\end{pmatrix}
=%
\begin{pmatrix}
(L^{T})^{-1} & 0\\
0 & L
\end{pmatrix}%
\begin{pmatrix}
\Lambda & 0\\
0 & \Lambda
\end{pmatrix}%
\begin{pmatrix}
L^{-1} & 0\\
0 & L^{T}%
\end{pmatrix}
\label{willab}%
\end{equation}
and note that $S=%
\begin{pmatrix}
L^{-1} & 0\\
0 & L^{T}%
\end{pmatrix}
$ is symplectic and $S^{T}=%
\begin{pmatrix}
(L^{T})^{-1} & 0\\
0 & L
\end{pmatrix}
$. A very important property is that the cylindrical symplectic capacity
$c_{\max}$ of a dual quantum pair can be explicitly calculated. In fact, we
have the following generalization of the relation
\begin{equation}
\operatorname*{Area}(X\times X^{\hbar})=4\hbar\label{area}%
\end{equation}
for intervals:

\begin{theorem}
\label{Theorem1}(i) Let $(X,P)$ be an arbitrary pair of centrally symmetric
convex bodies $X\subset\mathbb{R}_{x}^{n}$ and $P\subset\mathbb{R}_{p}^{n}$;
we have
\begin{equation}
c_{\max}(X\times P)=4\hbar\max\{\lambda>0:\lambda X^{\hbar}\subset P\}~.
\label{yaron1}%
\end{equation}
(ii) Assume that $X=\{x:Ax^{2}\leq\hbar\}$ and $P=\{p:Bp^{2}\leq\hbar\}$ with
$A,B$ symmetric and positive definite, and $AB\leq I_{n\times n}$. We have
\begin{equation}
c_{\max}(X\times P)=4\hbar\max\nolimits_{j}\{\lambda_{j}^{-1}\} \label{cmaxj}%
\end{equation}
the $\lambda_{j}\leq1$ being the eigenvalues of $AB$ ($\lambda_{j}>0$). In
particular
\begin{equation}
c_{\max}(X\times P)\geq4\hbar\label{yaron}%
\end{equation}
with equality if and only if $P=X^{\hbar}$.
\end{theorem}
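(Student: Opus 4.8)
The plan is to reduce everything to the symplectic diagonalization provided by Lemma \ref{ABL} and then invoke the formula (\ref{cell}) for the symplectic capacity of an ellipsoid, together with the fact (\ref{sympinv}) that $c_{\max}$ is symplectically invariant. For part (i), I would first observe that both sides of (\ref{yaron1}) are homogeneous of degree $2$ in a common dilation of $X$ and $P$, and that $c_{\max}$ is by definition the cylindrical capacity, so $c_{\max}(X\times P)$ equals $\pi$ times the squared radius of the smallest cylinder $Z_j^{2n}(R)$ into which $X\times P$ symplectically embeds. The key geometric point is that if $\lambda_0=\max\{\lambda>0:\lambda X^{\hbar}\subset P\}$, then $X\times(\lambda_0 X^{\hbar})$ is the largest "product of dual type" sitting inside $X\times P$ with one factor equal to $X$; since for an interval $\operatorname{Area}(X\times X^{\hbar})=4\hbar$ by (\ref{area}), and more generally the product $X\times X^{\hbar}$ of a body with its exact polar dual has cylindrical capacity exactly $4\hbar$ (this is the Gaussian/ellipsoid normalization, which one gets from (\ref{dualellh}) and (\ref{cell})), monotonicity of $c_{\max}$ under inclusion, together with the scaling $c_{\max}(\lambda_0 X^{\hbar})$-factor, yields $c_{\max}(X\times P)\ge 4\hbar\lambda_0$. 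The reverse inequality is the more delicate half: one must show that the optimal cylinder for $X\times P$ is already realized by a cylinder adapted to $X\times(\lambda_0 X^{\hbar})$, i.e. that enlarging $\lambda_0 X^{\hbar}$ to $P$ does not shrink the cylindrical capacity — this is where I expect the real work, and I would handle it by a supporting-hyperplane argument: at the extremal $\lambda_0$ the bodies $\lambda_0 X^{\hbar}$ and $P$ touch, and the touching direction pins down the axis of the minimal cylinder, so that $P$ can be symplectically squeezed into the same cylinder as $\lambda_0 X^{\hbar}$.

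For part (ii), everything becomes explicit. Apply Lemma \ref{ABL} to the pair $(A,B)$: there is an invertible $L$ with $L^{T}AL=L^{-1}B(L^{T})^{-1}=\Lambda=\operatorname{diag}(\sqrt{\lambda_1},\dots,\sqrt{\lambda_n})$, the $\lambda_j>0$ being the eigenvalues of $AB$. As noted after (\ref{willab}), the matrix $S=\begin{pmatrix}L^{-1}&0\\0&L^{T}\end{pmatrix}$ is symplectic, and it carries $X=\{Ax^2\le\hbar\}$ to $\{\Lambda x^2\le\hbar\}$ and $P=\{Bp^2\le\hbar\}$ to $\{\Lambda p^2\le\hbar\}$, hence $X\times P$ to the ellipsoid $\{(x,p):\Lambda x^2+\Lambda p^2\le\hbar\}$ — here one uses that $S(\ell_X)=\ell_X$ and $S(\ell_P)=\ell_P$ so the product structure is preserved. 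By (\ref{sympinv}), $c_{\max}(X\times P)=c_{\max}(\{\Lambda(x^2+p^2)\le\hbar\})$. This last ellipsoid is $\{Mz^2\le\hbar\}$ with $M=\operatorname{diag}(\Lambda,\Lambda)$; by the recipe recalled around (\ref{cell}) its symplectic capacity is $\pi\hbar/\nu_{\max}$, where $\nu_{\max}$ is the largest symplectic eigenvalue of $M$, i.e. the largest modulus of an eigenvalue of $JM$. A direct computation shows $JM=\begin{pmatrix}0&\Lambda\\-\Lambda&0\end{pmatrix}$ has eigenvalues $\pm i\sqrt{\lambda_j}$, so the symplectic eigenvalues of $M$ are $\nu_j=\sqrt{\lambda_j}$ and $\nu_{\max}=\sqrt{\max_j\lambda_j}$. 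Plugging in, and being careful that in the normalization of (\ref{cell}) the ellipsoid is written as $\{Mz^2\le R^2\}$ so here $R^2=\hbar$, we get $c_{\max}(X\times P)=\pi\hbar/\sqrt{\max_j\lambda_j}$; but one must reconcile this with the claimed $4\hbar\max_j\{\lambda_j^{-1}\}$. The reconciliation comes from (i): for ellipsoids $X^{\hbar}=\{A^{-1}p^2\le\hbar\}$ by (\ref{dualellh}), and $\lambda X^{\hbar}\subset P$ amounts to $\lambda^{-2}A^{-1}\ge B$, i.e. $\lambda^2\le(\text{all eigenvalues of }(A^{-1})(B)^{-1}=\dots)$; unwinding, $\max\{\lambda>0:\lambda X^{\hbar}\subset P\}=\min_j\lambda_j^{-1/2}=(\max_j\lambda_j)^{-1/2}$, whence (\ref{yaron1}) gives $c_{\max}(X\times P)=4\hbar(\max_j\lambda_j)^{-1/2}$. (I will double-check the precise powers in (\ref{cell}) versus the interval normalization $4\hbar$ when writing the final version; the statement (\ref{cmaxj}) as printed should be read with that calibration in mind.)

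Finally, the inequality (\ref{yaron}) and its equality case are immediate: the hypothesis $AB\le I_{n\times n}$ means every eigenvalue $\lambda_j$ of $AB$ satisfies $\lambda_j\le 1$, so $\max_j\lambda_j^{-1}\ge 1$ and $c_{\max}(X\times P)\ge 4\hbar$ by (\ref{cmaxj}); equality holds iff $\max_j\lambda_j=1$, and since all $\lambda_j\le1$ this forces $\lambda_j=1$ for every $j$, i.e. $AB=I_{n\times n}$, which by (\ref{Dual}) is exactly the condition $P=X^{\hbar}$. The main obstacle, as indicated, is the reverse inequality in part (i) — establishing that the cylindrical capacity of $X\times P$ is not strictly larger than $4\hbar\lambda_0$ — and my fallback if the supporting-hyperplane argument is too delicate is to first prove (ii) directly via the diagonalization above and then derive the general (i) by an approximation/sandwiching argument, bounding $X\times P$ between inner and outer ellipsoids whose capacities pinch to $4\hbar\lambda_0$; but I expect the clean route is to use the known variational characterization of $c_{\max}$ on convex bodies together with the biduality $(X^{\hbar})^{\hbar}=X$.
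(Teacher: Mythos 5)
The decisive gap is in part (i). The paper does not prove formula (\ref{yaron1}) at all: it quotes it from Artstein-Avidan, Karasev and Ostrover (Remark 4.2 of \cite{arkaos13}), adding only the trivial rescaling from $\hbar=1$ to general $\hbar$. Your plan treats (i) as something to be established by a supporting-hyperplane argument (``the touching direction pins down the axis of the minimal cylinder''), but that sketch cannot be completed at this level of generality: already the special case $P=X^{\hbar}$ asserts $c_{\max}(X\times X^{\hbar})=4\hbar$ for \emph{every} centrally symmetric convex body, which is the main theorem of \cite{arkaos13} (tied to the Viterbo conjecture and proved there via the Ekeland--Hofer--Zehnder capacity, closed characteristics and Minkowski billiards), not an elementary convexity fact. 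Your lower bound also leans on the claim that $c_{\max}(X\times X^{\hbar})=4\hbar$ ``follows from (\ref{dualellh}) and (\ref{cell})''; it does not, because $X\times X^{\hbar}$ is a Lagrangian product (even for $X$ a ball it is $\mathcal{B}_{X}^{n}(\sqrt{\hbar})\times\mathcal{B}_{P}^{n}(\sqrt{\hbar})$, not the ball $\mathcal{B}^{2n}(\sqrt{\hbar})$), so (\ref{cell}) is inapplicable --- the mismatch between the factor $4$ and the factor $\pi$ is exactly the point. So either cite \cite{arkaos13} as the paper does, or accept that (i) requires a substantial independent proof you have not supplied.

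The same misstep recurs in your direct attack on (ii): the block-diagonal symplectic matrix built from Lemma \ref{ABL} maps $X\times P$ onto the \emph{product} $\{x:\Lambda x^{2}\leq\hbar\}\times\{p:\Lambda p^{2}\leq\hbar\}$, not onto the ellipsoid $\{\Lambda x^{2}+\Lambda p^{2}\leq\hbar\}$, so the value $\pi\hbar/\sqrt{\max_{j}\lambda_{j}}$ you obtain from (\ref{cell}) is the capacity of a different (strictly smaller) body; the ``calibration'' problem you flag is a symptom of this, not a normalization to be fixed later. Your fallback --- reduce (ii) to (i) by computing $\lambda_{0}=\max\{\lambda>0:\lambda X^{\hbar}\subset P\}$ for the ellipsoid pair --- is in fact the paper's route (the paper applies (i) after the Lemma \ref{ABL} reduction), and your computation $\lambda_{0}=(\max_{j}\lambda_{j})^{-1/2}$ is correct; but you leave unresolved the resulting discrepancy with the printed right-hand side of (\ref{cmaxj}), and your treatment of the equality case inherits it: from $c_{\max}(X\times P)=4\hbar(\max_{j}\lambda_{j})^{-1/2}$, equality with $4\hbar$ only forces the \emph{largest} eigenvalue of $AB$ to equal $1$, so the inference ``this forces $\lambda_{j}=1$ for every $j$, hence $P=X^{\hbar}$ by (\ref{Dual})'' does not follow from what you derived (it would follow only from the printed $4\hbar\max_{j}\lambda_{j}^{-1}$ form). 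As it stands the proposal therefore has a genuine gap in (i) and does not bring (ii), including the equality statement in (\ref{yaron}), to a conclusion.
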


\begin{proof}
In \cite{arkaos13} (Remark 4.2) Artstein-Avidan \textit{et} \textit{al.} show
that (\ref{yaron1}) holds for $\hbar=1$; an elementary rescaling argument
immediately yields the general case. Let now $X=\{x:Ax^{2}\leq\hbar\}$ and
$P=\{p:Bp^{2}\leq\hbar\}$ and choose $L$ such that $L^{T}AL=L^{-1}%
B(L^{T})^{-1}=\Lambda$ (Lemma \ref{ABL}). We have
\begin{align*}
L^{-1}(X)  &  =\{x:%
{\textstyle\sum_{j=1}^{n}}
\lambda_{j}^{1/2}x_{j}^{2}\leq\hbar\}=\Lambda^{-1/4}\mathcal{B}_{X}^{n}%
(\sqrt{\hbar})\\
L^{T}(P)  &  =\{p:%
{\textstyle\sum_{j=1}^{n}}
\lambda_{j}^{1/2}p_{j}^{2}\leq\hbar\}=\Lambda^{-1/4}\mathcal{B}_{P}^{n}%
(\sqrt{\hbar})~
\end{align*}
and thus
\begin{align*}
c_{\max}(X\times P)  &  =c_{\max}(L^{-1}(X)\times L^{T}(P))\\
&  =c_{\max}(\Lambda^{-1/4}\mathcal{B}_{X}^{n}(\sqrt{\hbar})\times
\Lambda^{-1/4}\mathcal{B}_{P}^{n}(\sqrt{\hbar}))
\end{align*}
where the first equality follows from the symplectic invariance formula
(\ref{sympinv}). To prove (\ref{cmaxj}) let us determine the largest
$\lambda>0$ such that
\begin{multline*}
c_{\max}(\Lambda^{-1/4}\mathcal{B}_{X}^{n}(\sqrt{\hbar})\times\Lambda
^{-1/4}\mathcal{B}_{P}^{n}(\sqrt{\hbar}))=\\
\lambda(\Lambda^{-1/4}\mathcal{B}_{X}^{n}(\sqrt{\hbar}))^{\hbar}\subset
\Lambda^{-1/4}\mathcal{B}_{P}^{n}(\sqrt{\hbar})~.
\end{multline*}
We have $(\Lambda^{-1/4}\mathcal{B}_{X}^{n}(\sqrt{\hbar}))^{\hbar}%
=\Lambda^{1/4}\mathcal{B}_{P}^{n}(\sqrt{\hbar})$ and hence
\[
\lambda\Lambda^{1/4}\mathcal{B}_{X}^{n}(\sqrt{\hbar}))\subset\Lambda
^{-1/4}\mathcal{B}_{P}^{n}(\sqrt{\hbar})
\]
or, equivalently, $\lambda\mathcal{B}_{X}^{n}(\sqrt{\hbar})\subset
\Lambda^{-1/2}\mathcal{B}_{P}^{n}(\sqrt{\hbar})$. But this means that we must
have $\lambda^{2}\geq\lambda_{j}^{-1}$ for all $j=1,...,n$. This proves
formula (\ref{cmaxj}). The inequality (\ref{yaron}) follows since $AB\leq
I_{n\times n}$. Suppose that $P=X^{\hbar}$; then $B=A^{-1}$ so that the
eigenvalues $\lambda_{j}$ are all equal to one, hence $c_{\max}(X\times
X^{\hbar})=4\hbar$. If conversely $c_{\max}(X\times P)=4\hbar$ then we must
have $\max\nolimits_{j}\{\lambda_{j}^{-1}\}=1$ that is again $\lambda_{j}=1$
for all $j$ which means that we have $A^{1/2}BA^{1/2}=I_{n\times n}$ and hence
$B=A^{-1}$, that is $P=X^{\hbar}$.
\end{proof}

\subsection{Polar duality by orthogonal projections}

\subsubsection{The Schur complement}

It will be convenient to introduce the matrix
\begin{equation}
M=\frac{\hbar}{2}\Sigma^{-1} \label{MH}%
\end{equation}
in which case the covariance ellipsoid takes the form
\begin{equation}
\Omega=\{z:Mz\cdot z\leq\hbar\}~. \label{covell1}%
\end{equation}
The matrix $M$ is a real positive definite symmetric $2n\times2n$ matrix:
$M=M^{T}>0$. We will write it in block-matrix form
\begin{equation}
M=%
\begin{pmatrix}
M_{XX} & M_{XP}\\
M_{PX} & M_{PP}%
\end{pmatrix}
\label{M}%
\end{equation}
where the blocks are $n\times n$ matrices. The condition $M>0$ ensures us that
$M_{XX}>0$, $M_{PP}>0$, and $M_{PX}=M_{XP}^{T}$. Recall \cite{zhang} the
following definition: the $n\times n$ matrices
\begin{align}
M/M_{PP}  &  =M_{XX}-M_{XP}M_{PP}^{-1}M_{PX}\label{schurm1}\\
M/M_{XX}  &  =M_{PP}-M_{PX}M_{XX}^{-1}M_{XP} \label{schurm2}%
\end{align}
are the \textit{Schur complements} in $M$ of $M_{PP}$ and $M_{XX}$,
respectively, and we have $M/M_{PP}>0$, $M/M_{XX}>0$ \cite{zhang}.

We assume that the uncertainty principle in its form (\ref{Quantum}) holds.
This is equivalent to the existence of $S\in\operatorname*{Sp}(n)$ such that
$S(\mathcal{B}^{2n}(\sqrt{\hbar})\subset\Omega$ (see
\cite{go09,blob,Birk,golu09}).

Let $\Pi_{X}$ (resp. $\Pi_{P}$) be the orthogonal projection $\mathbb{R}%
_{z}^{2n}\longrightarrow\mathbb{R}_{x}^{n}$ (\textit{resp}. $\mathbb{R}%
_{z}^{2n}\longrightarrow\mathbb{R}_{p}^{n}$) and set
\begin{equation}
\Omega_{X}=\Pi_{X}\Omega\text{ \ },\text{ \ }\Omega_{P}=\Pi_{P}\Omega~.
\label{shadows}%
\end{equation}

\begin{lemma}
\label{LemmaProj}Let $\Omega=\{z:Mz\cdot z\leq\hbar\}$, $M>0$. The orthogonal
projections $\Omega_{X}$ and $\Omega_{P}$ of $\Omega$ are the ellipsoids%
\begin{align}
\Omega_{X}  &  =\{x\in\mathbb{R}_{x}^{n}:(M/M_{PP})x^{2}\leq\hbar
\}\label{boundb}\\
\Omega_{P}  &  =\{p\in\mathbb{R}_{p}^{n}:(M/M_{XX})p^{2}\leq\hbar\}~.
\label{bounda}%
\end{align}

\end{lemma}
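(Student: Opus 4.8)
The plan is to compute the orthogonal projection $\Omega_X = \Pi_X\Omega$ directly from the definition $\Omega = \{z : Mz\cdot z \leq \hbar\}$, and to recognize the resulting quadratic form as the Schur complement $M/M_{PP}$. Write $z = (x,p)$ with $x\in\mathbb{R}_x^n$, $p\in\mathbb{R}_p^n$, so that
\[
Mz\cdot z = M_{XX}x\cdot x + 2M_{XP}p\cdot x + M_{PP}p\cdot p .
\]
A point $x$ lies in $\Omega_X$ if and only if there exists some $p$ with $Mz\cdot z \leq \hbar$, i.e. if and only if
\[
\min_{p\in\mathbb{R}_p^n}\bigl( M_{XX}x\cdot x + 2M_{XP}p\cdot x + M_{PP}p\cdot p \bigr) \leq \hbar .
\]
Since $M_{PP} > 0$ (a consequence of $M > 0$), the inner minimization over $p$ is an unconstrained strictly convex quadratic problem: the minimizer is $p_{*} = -M_{PP}^{-1}M_{PX}x$, and substituting back gives the minimum value
\[
M_{XX}x\cdot x - M_{XP}M_{PP}^{-1}M_{PX}x\cdot x = (M/M_{PP})x\cdot x .
\]
Hence $x\in\Omega_X \iff (M/M_{PP})x^2 \leq \hbar$, which is exactly \eqref{boundb}. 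The identity \eqref{bounda} for $\Omega_P$ follows by the symmetric argument, minimizing instead over $x$ and using $M_{XX} > 0$; alternatively one may invoke the symmetry of the block structure under swapping the roles of $X$ and $P$.

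The only points requiring a word of justification are the positivity of the diagonal blocks $M_{XX}, M_{PP}$ — which is immediate since any principal submatrix of a positive definite matrix is positive definite, and this is already noted in the text right after \eqref{M} — and the fact that $M/M_{PP} > 0$, so that \eqref{boundb} genuinely describes a (nondegenerate, bounded) ellipsoid; this is the standard Schur complement fact recalled from \cite{zhang} just before the lemma. There is no real obstacle here: the completion of the square is the whole content, and the identification of the residual quadratic form with the Schur complement is a definition. The only thing to be careful about is that projecting an ellipsoid onto a coordinate subspace is the same as marginalizing the associated quadratic form by minimization (the "shadow" is obtained by the extremal $p$, not by slicing at $p=0$); once that is observed, the computation is forced.
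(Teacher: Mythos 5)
Your proof is correct. It differs from the paper's argument in framing, though the computation at its core is the same. The paper determines the \emph{boundary} $\partial\Omega_X$ by a tangency condition: a boundary point of the shadow must come from a point of $\partial\Omega$ where the normal $\nabla_z Q(z)=2Mz$ is parallel to $\mathbb{R}_x^n\times 0$, which gives $M_{PX}x+M_{PP}p=0$, i.e.\ $p=-M_{PP}^{-1}M_{PX}x$, and substitution into $Mz\cdot z=\hbar$ yields $(M/M_{PP})x^2=\hbar$. You instead characterize the \emph{whole} projection: $x\in\Pi_X\Omega$ iff $\min_p\, Mz\cdot z\le\hbar$, and the unconstrained minimization (possible since $M_{PP}>0$) produces the same critical point $p_*=-M_{PP}^{-1}M_{PX}x$ and the same residual form $(M/M_{PP})x\cdot x$. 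The two routes are thus the same completion of the square read in two ways; what your version buys is that it proves the set equality $\Omega_X=\{x:(M/M_{PP})x^2\le\hbar\}$ directly for every $x$, whereas the paper's tangency argument only identifies the boundary and implicitly relies on the projection of a compact convex body being determined by its boundary (true, but left tacit). Your closing remarks on $M_{PP}>0$, $M/M_{PP}>0$ and the marginalization-by-minimization point are exactly the right things to flag; no gap remains.
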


\begin{proof}
Let us set $Q(z)=Mz^{2}-\hbar$; the boundary $\partial\Omega$ of the
hypersurface $Q(z)=0$ is defined by%
\begin{equation}
M_{XX}x^{2}+2M_{PX}x\cdot p+M_{PP}p^{2}=\hbar~. \label{mabab}%
\end{equation}
A point $x$ belongs to the boundary $\partial\Omega_{X}$ of $\Omega_{X}$ if
and only if the normal vector to $\partial\Omega$ at the point $z=(x,p)$ is
parallel to $\mathbb{R}_{x}^{n}\times0$ hence we get the constraint
$\nabla_{z}Q(z)=2Mz\in\mathbb{R}_{x}^{n}\times0$; this is equivalent to saying
that $M_{PX}x+M_{PP}p=0$, that is to $p=-M_{PP}^{-1}M_{PX}x$. Inserting this
value of $p$ in the equation (\ref{mabab}) shows that $\partial\Omega_{X}$ is
the set of all $x$ such that $(M/M_{PP})x^{2}=\hbar$, which yields
(\ref{boundb}). Formula (\ref{bounda}) is proven in the same way, swapping the
subscripts $X$ and $P$.
\end{proof}

\subsubsection{Proof of the projection theorem}

Let us now prove the projection theorem (we have given a proof thereof in
\cite{gopolar} when the covariance matrix is block-diagonal).

\begin{theorem}
\label{Thm1}Assume that the covariance ellipsoid $\Omega$ satisfies the
quantization condition $\Sigma+\frac{i\hbar}{2}J\geq0$ (resp. $M^{-1}+iJ\geq
0$). Then, the orthogonal projections $\Omega_{X}=\Pi_{X}\Omega$ and
$\Omega_{P}=\Pi_{P}\Omega$ of $\Omega$ on $\mathbb{R}_{x}^{n}$ and
$\mathbb{R}_{p}^{n}$, respectively, form a dual quantum pair: we have
$\Omega_{X}^{\hbar}\subset\Omega_{P}$.
\end{theorem}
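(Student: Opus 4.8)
The plan is to reduce the set inclusion $\Omega_X^\hbar\subset\Omega_P$ to a single Löwner inequality between $n\times n$ matrices, and then to read that inequality off from the quantization hypothesis by a Schur-complement computation. First I would invoke Lemma \ref{LemmaProj} to write the two shadows explicitly as ellipsoids, $\Omega_X=\{x:(M/M_{PP})x^2\leq\hbar\}$ and $\Omega_P=\{p:(M/M_{XX})p^2\leq\hbar\}$, and then apply the polar-duality formula (\ref{dualellh}) to get $\Omega_X^\hbar=\{p:(M/M_{PP})^{-1}p^2\leq\hbar\}$. Since for positive definite symmetric $C,D$ one has $\{Cp^2\leq\hbar\}\subset\{Dp^2\leq\hbar\}$ precisely when $C\geq D$ (equivalently, via the Löwner identity recalled after (\ref{Dual}), $(M/M_{PP})(M/M_{XX})\leq I_{n\times n}$), the theorem is equivalent to
\[
(M/M_{PP})^{-1}\geq M/M_{XX}.
\]
Using the standard block-inversion identities $(M^{-1})_{XX}=(M/M_{PP})^{-1}$ and $(M^{-1})_{PP}=(M/M_{XX})^{-1}$, and writing $N:=M^{-1}$, this collapses to the clean statement $N_{XX}\geq N_{PP}^{-1}$.

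Next I would exploit the hypothesis in the form $M^{-1}+iJ\geq0$, i.e. $N+iJ\geq0$; written out, the Hermitian matrix $\left(\begin{smallmatrix}N_{XX}&N_{XP}+iI\\N_{PX}-iI&N_{PP}\end{smallmatrix}\right)$ is positive semidefinite. Because $N>0$ forces $N_{PP}>0$, the Schur-complement criterion for positive semidefiniteness gives
\[
N_{XX}-(N_{XP}+iI)N_{PP}^{-1}(N_{PX}-iI)\geq0 .
\]
Expanding the product, its real part is $N_{XP}N_{PP}^{-1}N_{PX}+N_{PP}^{-1}$ while its imaginary part is $-i$ times the antisymmetric matrix $N_{XP}N_{PP}^{-1}-N_{PP}^{-1}N_{PX}$. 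Testing the displayed Hermitian inequality against real vectors (on which the antisymmetric part contributes nothing) yields $N_{XX}-N_{XP}N_{PP}^{-1}N_{PX}-N_{PP}^{-1}\geq0$, and since $N_{XP}N_{PP}^{-1}N_{PX}\geq0$ this forces $N_{XX}\geq N_{PP}^{-1}$, which is exactly what the first paragraph requires.

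The only genuinely delicate point I anticipate is the complex Hermitian Schur-complement manipulation: one must keep careful track of which terms of $(N_{XP}+iI)N_{PP}^{-1}(N_{PX}-iI)$ are Hermitian, and recognize that the positive semidefiniteness of a complex Hermitian matrix, when probed by real vectors, is governed entirely by its symmetric real part — this is precisely what converts the quantization condition into the purely real inequality $N_{XX}\geq N_{PP}^{-1}$. Everything else — Lemma \ref{LemmaProj}, the ellipsoid duality formula (\ref{dualellh}), the block-inversion identities, and the dictionary between ellipsoid inclusion and the Löwner order — is routine bookkeeping already available in the text, and I would keep those steps brief.
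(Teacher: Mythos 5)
Your argument is correct, and it follows a genuinely different route from the paper's. The paper does not work with the complex positivity condition directly: it first invokes the equivalence (\ref{camel2}) between the quantum condition and the existence of a quantum blob $S(\mathcal{B}^{2n}(\sqrt{\hbar}))\subset\Omega$, notes via the monotonicity property (\ref{XY}) that it suffices to show the projections of that inscribed blob form a dual pair, and then carries out the Schur-complement algebra for the symmetric \emph{symplectic} matrix $R=(SS^{T})^{-1}$, where the relations $RJR=J$ collapse the complements to $R/D=D^{-1}$, $R/A=A^{-1}$ and reduce everything to $AD=I_{n\times n}+B^{2}\geq I_{n\times n}$. You instead convert the inclusion $\Omega_{X}^{\hbar}\subset\Omega_{P}$ into the Löwner inequality $(M/M_{PP})^{-1}\geq M/M_{XX}$, pass to $N=M^{-1}$ by block inversion, and extract $N_{XX}\geq N_{PP}^{-1}$ straight from the Hermitian Schur complement of $N+iJ\geq0$ by testing against real vectors; all the steps (the Hermitian Schur criterion with $N_{PP}>0$, the vanishing of the antisymmetric imaginary part on real vectors, the positivity of $N_{XP}N_{PP}^{-1}N_{PX}$) are sound. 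What your approach buys is self-containedness: it uses only linear algebra and does not lean on the nontrivial characterization (\ref{camel2}) (which rests on Williamson/Gromov-type results from the earlier literature), and it actually yields the stronger matrix inequality $N_{XX}\geq N_{PP}^{-1}+N_{XP}N_{PP}^{-1}N_{PX}$, i.e. $\Sigma_{XX}\geq\tfrac{\hbar^{2}}{4}\Sigma_{PP}^{-1}+\Sigma_{XP}\Sigma_{PP}^{-1}\Sigma_{PX}$, of which the theorem is a corollary. What the paper's route buys is the geometric picture central to its narrative: the dual-pair property is exhibited as being inherited from the shadows of an inscribed quantum blob, tying the statement explicitly to the symplectic-camel formulation of the uncertainty principle.
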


\begin{proof}
In view of property (\ref{XY}) it suffices to prove that there exist
$Y\subset\Omega_{X}$ and $Q\subset\Omega_{P}$ such that $Y^{\hbar}\subset Q$.
For this purpose we recall (property (\ref{camel2})) that the quantum
condition $\Sigma+(i\hbar/2)J\geq0$ is equivalent to the existence of
$S\in\operatorname*{Sp}(n)$ such that $S(\mathcal{B}^{2n}(\sqrt{\hbar
}))\subset\Omega$; it is therefore sufficient to show that the projections
$Y=\Pi_{X}(S(\mathcal{B}^{2n}(\sqrt{\hbar}))$ and $Q=\Pi_{P}(S(\mathcal{B}%
^{2n}(\sqrt{\hbar}))$ form a quantum pair, that is $Y^{\hbar}\subset Q$. The
ellipsoid $S(\mathcal{B}^{2n}(\sqrt{\hbar}))$ consists of all $z\in
\mathbb{R}_{z}^{2n}$ such that $Rz^{2}\leq\hbar$ where $R=(SS^{T})^{-1}$.
Since $R$ is symmetric and positive definite we can write it in block-matrix
form as
\[
R=%
\begin{pmatrix}
A & B\\
B^{T} & D
\end{pmatrix}
\]
\ with $A>0$, $D>0$, and the projections $Y$ and $Q$ are given by formulas
(\ref{boundb}) and (\ref{bounda}), which read here
\begin{align*}
Y  &  =\{x:(R/D)x^{2}\leq\hbar\}\\
Q  &  =\{p:(R/A)p^{2}\leq\hbar\}~;
\end{align*}
we have $R/D>0$ and $R/A>0$ \cite{zhang}. In view of formula (\ref{dualellh})
we have
\[
Y^{\hbar}=\{p:(R/D)^{-1}p^{2}\leq\hbar\}
\]
hence the condition $Y^{\hbar}\subset Q$ is equivalent to%
\begin{equation}
(R/D)^{-1}\geq R/A~. \label{RPPXX}%
\end{equation}
Let us prove that this inequality holds. The conditions $R\in
\operatorname*{Sp}(n)$, $R=R^{T}$ being equivalent to $RJR=J$ we have%
\begin{gather}
AB^{T}=BA\text{ , \ }B^{T}D=DB\label{rxx}\\
AD-B^{2}=I_{n\times n}~. \label{rpp}%
\end{gather}
These relations\ imply that the Schur complements $R/D$ and $R/A$ are
\begin{align}
R/D  &  =(AD-B^{2})D^{-1}=D^{-1}\label{34}\\
R/A  &  =A^{-1}(AD-B^{2})=A^{-1} \label{35}%
\end{align}
and hence the inequality (\ref{RPPXX}) holds if and only $D\geq A^{-1}$. This
condition is in turn equivalent to $AD\geq I_{n\times n}$. In fact, the
inequality $D\geq A^{-1}$ is equivalent to $A^{1/2}DA^{1/2}\geq I_{n\times n}%
$; now $A^{1/2}DA^{1/2}$ and $AD$ have the same eigenvalues hence $AD\geq
I_{n\times n}$. If conversely $AD\geq I_{n\times n}$ then $D^{1/2}AD^{1/2}\geq
I_{n\times n}$ hence $A\geq D^{-1}$ that is $D\geq A^{-1}$. Now (\ref{rpp})
implies that $AD=I_{n\times n}+B^{2}$ hence we will have $AD\geq I_{n\times
n}$ if $B^{2}\geq0$. To prove that $B^{2}\geq0$ we note that since $AB^{T}=BA$
(first formula (\ref{rxx})) we have $B^{T}=A^{-1}BA$ so that $B$ and
$B^{T}=B^{\ast}$ have the same eigenvalues and these must be real. It follows
that the eigenvalues of $B^{2}$ are $\geq0$ hence $B^{2}\geq0$ as claimed.
\end{proof}

We will analyze the inverse problem in Section \ref{secpauli}, where we
investigate whether a covariance ellipsoid $\Omega$ can be reconstructed from
its orthogonal projections $\Omega_{X}$ and $\Omega_{P}$ (\textquotedblleft
Pauli's problem\textquotedblright).

\subsection{Quantum polarity and dynamics}

So far we have been dealing only with time-independent processes. Let us now
have a look at quantum polarity from a dynamical point of view.

\subsubsection{Quadratic Hamiltonians}

The following is well-known \cite{Birk,Birkbis,Littlejohn}. Let $H$ be a
Hamiltonian function on phase space $\mathbb{R}_{z}^{2n}$; we assume that $H$
is a quadratic form in the position and momentum variables $x_{j},p_{k}$. Such
a function can always be written as
\begin{equation}
H(z)=\frac{1}{2}H^{\prime\prime}z\cdot z \label{Hamquad}%
\end{equation}
where $H^{\prime\prime}$ (the Hessian of $H$) is a real symmetric $2n\times2n$
matrix ; it is convenient to rewrite this as%
\[
H(z)=\frac{1}{2}JXz\cdot z
\]
where $X=-JH^{\prime\prime}$ satisfies the condition $JX+X^{T}J=0$,
\textit{i.e.} $X$ is in the symplectic Lie algebra $\mathfrak{sp}(n)$
\cite{Birk}. The associated Hamilton equations are $\dot{z}=JXz$ and its
solutions are hence $z(t)=e^{tJX}z(0)$. This means that the Hamiltonian flow
determined by the quadratic Hamiltonian function $H$ consists of the
symplectic matrices
\begin{equation}
S_{t}=e^{tJX}\in\operatorname*{Sp}(n)~. \label{stex}%
\end{equation}
Using the path-lifting theorem from the theory of fiber bundles, or performing
a direct (cumbersome) calculation one shows that the solution of the
corresponding Schr\"{o}dinger equation
\[
i\hbar\frac{\partial\psi}{\partial t}=\widehat{H}(x,-i\hbar\nabla_{x})\psi
\]
where $\widehat{H}=\widehat{H}(x,-i\hbar\nabla_{x})$ is the Weyl quantization
of $H$, is given by
\[
\psi(x,t)=\widehat{S}_{t}\psi(x,0)
\]
where the unitary operators $\widehat{S_{t}}\in\operatorname*{Mp}(n)$ are
defined as follows: as $t$ varies, the matrices $S_{t}$ describe a smooth path
in $\operatorname*{Sp}(n)$ passing through the identity $I_{n\times n}$ at
time $t=0$. To this path corresponds a unique smooth path of metaplectic
operators $\widehat{S}_{t}$ in $\operatorname*{Mp}(n)$ such that
$\widehat{S}_{0}$ is the identity operator, and this path is precisely the
quantum propagator, \textit{i.e}. $\widehat{S}_{t}=e^{-i\widehat{H}t/\hbar}$.

The discussion above extends to the case where the Hamiltonian has
time-dependent coefficients,\textit{ i.e.} is of the type
\[
H(z,t)=\frac{1}{2}H^{\prime\prime}(t)z\cdot z
\]
where $H^{\prime\prime}(t)$ depends continuously on $t$. In this case the
propagator $S_{t}$ cannot in general be written in a simple explicit form, and
it is advantageous to use the time-dependent flow defined by $S_{t,t^{\prime}%
}=S_{t}(S_{t^{\prime}})^{-1}$. The symplectic transformation $S_{t,t^{\prime}%
}$ takes a phase space point $z^{\prime}$ at time $t^{\prime}$ to a point $z$
at time $t$. In a similar way one can consider the Schr\"{o}dinger equation%
\[
i\hbar\frac{\partial\psi}{\partial t}=\widehat{H}(x,-i\hbar\nabla_{x},t)\psi
\]
whose solution is given by
\[
\psi(x,t)=\widehat{S}_{t,t^{\prime}}\psi(x,t^{\prime})
\]
where the $\widehat{S}_{t,t^{\prime}}\in\operatorname*{Mp}(n)$ are defined by
lifting the time-dependent flow $S_{t,t^{\prime}}$ to the metaplectic group;
see \cite{Birkbis} for details.

\subsubsection{Time-evolution of a dual quantum pair}

Let us begin by studying the orthogonal projections of a phase space ellipsoid
under the action of the flow $S_{t}\in\operatorname*{Sp}(n)$ (to simplify the
notation we are limiting ourselves here to the case of a time-independent
Hamiltonian; everything carries over to the time-dependent case without
difficulty). Consider an ellipsoid
\begin{equation}
\Omega=\{z:Mz^{2}\leq\hbar\}
\end{equation}
where as usual it is assumed that $M>0$. We have seen (Theorem \ref{Thm1})
that if $\Omega$ satisfies the quantum condition $M^{-1}+iJ\geq0$ (which we
assume from now on) then the projections $\Omega_{X}$ and $\Omega_{P}$ on the
coordinate Lagrangian planes $\ell_{X}=\mathbb{R}_{x}^{n}\times0$ and
$\ell_{P}=0\times\mathbb{R}_{x}^{n}$ form a dual quantum pair. Let $S_{t}%
\in\operatorname*{Sp}(n)$ be the flow determined by the Hamilton equations
associated with the quadratic Hamiltonian function (\ref{Hamquad}). As time
elapses, the ellipsoid $\Omega$ will deform into a new ellipsoid
\[
\Omega_{t}=S_{t}(\Omega)=\{z:M_{t}z^{2}\leq\hbar\}
\]
where $M_{t}=S_{-t}^{T}MS_{-t}$ (we have $S_{t}^{-1}=S_{-t}$). It is easily
seen that $M_{t}$ satisfies the quantum condition $M_{t}^{-1}+iJ\geq0$: since
$S_{t}$ is symplectic we have $S_{t}JS_{t}^{T}=J$ and hence%
\[
M_{t}^{-1}+iJ=S_{t}MS_{t}^{T}+iJ=S_{t}(M^{-1}+iJ)S_{t}^{T}\geq0~.
\]
It follows from Theorem \ref{Thm1} that the orthogonal projections
$\Omega_{t,X}$ and $\Omega_{t,P}$ again form a dual quantum pair; in fact
these projections can be calculated using formulas (\ref{boundb}) and
(\ref{bounda}):
\begin{align}
\Omega_{t,X}  &  =\{x:(M_{t}/M_{t,PP})x^{2}\leq\hbar\}\\
\Omega_{t,P}  &  =\{p:(M_{t}/M_{t,XX})p^{2}\leq\hbar\}
\end{align}
where $M_{t}/M_{t,PP}$ and $M_{t}/M_{t,XX}$ are the Schur complements in
$M_{t}$. We will not write the explicit formulas in terms of $M$ and $S_{t}$
here (they are quite complicated to work out), but rather focus on the volumes
of the corresponding ellipsoids. The Schur complement satisfies the identity
\cite{zhang}%
\[
\det M_{PP}\det(M/M_{PP})=\det M_{XX}\det(M/M_{XX})=\det M
\]
hence we will have, since $\det M_{t}=\det M$,
\begin{align*}
\det(M_{t}/M_{t,PP})  &  =\frac{\det M}{\det M_{t,PP}}\\
\det(M_{t}/M_{t,XX})  &  =\frac{\det M}{\det M_{t,XX}}~.
\end{align*}
It follows that
\begin{align*}
\operatorname*{Vol}\Omega_{t,X}  &  =\left(  \frac{1}{\det M_{t,PP}}\right)
^{n}\operatorname*{Vol}\Omega_{X}\\
\operatorname*{Vol}\Omega_{t,P}  &  =\left(  \frac{1}{\det M_{t,XX}}\right)
^{n}\operatorname*{Vol}\Omega_{P}%
\end{align*}
hence the volumes of the orthogonal projections $\Omega_{t,X}$ and
$\Omega_{t,P}$ are not constant in general, as opposed to the total volume
$\operatorname*{Vol}\Omega_{t}=\operatorname*{Vol}\Omega$ which is conserved
(Liouville's theorem). That we have a quantum-type spreading in the classical
case should not be surprising; such a possibility was already pointed out by
Littlejohn \cite{Littlejohn}. The reason behind this phenomenon lies in the
fact that there is a one-to-one correspondence between the classical flow of a
quadratic Hamiltonian and the corresponding quantum propagator, as discussed
above. So, the result above is just a quantum mechanical result in disguise.
In fact, suppose that
\[
\widehat{\rho}=\sum_{j}\lambda_{j}|\psi_{j}\rangle\langle\psi_{j}|
\]
is a quantum state with covariance matrix ellipsoid $\Omega$ and Wigner
distribution.%
\begin{equation}
W_{\widehat{\rho}}(z)=\sum_{j}\lambda_{j}W\psi_{j}(z)~.
\end{equation}
The time-evolution of $W_{\widehat{\rho}}(z)$ (and hence of $\widehat{\rho}$)
is given by the formula
\begin{align}
W_{\widehat{\rho}}(z,t)  &  =\sum_{j}\lambda_{j}W(\widehat{S}_{t}\psi
_{j})(z)\\
&  =\sum_{j}\lambda_{j}W(\psi_{j})(S_{-t}z)
\end{align}
from which it follows that the covariance ellipsoid of the evolved state
$\widehat{\rho}_{t}$ is precisely $S_{t}(\Omega)$.

\section{Pauli's Problem and Polar Duality\label{secpauli}}

Wave functions do not have an immediate experimental interpretation; what may
be deduced from experiments is rather the associated probability distributions
$|\psi(x)|^{2}$ and $|\widehat{\psi}(p)|^{2}$ (or, equivalently, the Wigner
transform $W\psi(x,p)$). Pauli asked in \cite{Pauli} the famous question
whether the probability densities $|\psi(x)|^{2}$ and $|\widehat{\psi}%
(p)|^{2}$ uniquely determine the wavefunction $\psi(x)$. in Pauli's words:

\begin{quotation}
\textquotedblleft\textit{The mathematical problem as to whether, for given
probability densities }$W(p)$\textit{ and }$W(x)$\textit{, the wavefunction
}$\psi$\textit{ (...) is always uniquely determined, has still not been
investigated in all its generality}\textquotedblright
\end{quotation}

We know that the answer is negative; in fact there is in general
non-uniqueness of the solution, which led Corbett \cite{Corbett} to introduce
the notion of \textquotedblleft Pauli partners\textquotedblright.
Mathematically speaking, the reconstruction problem we address here is that of
the reconstruction of a phase space ellipsoid (subject to a quantization
condition) from its orthogonal projections on the $x$- and $p$-spaces. It is a
particular case of what is called quantum tomography theory; see for instance
\cite{mancini,mankomanko,Moroz,pare}.

\subsection{The case $n=1$}

We have seen that the orthogonal projections $\Omega_{X}$ and $\Omega_{P}$ of
a quantum covariance matrix form a quantum dual pair. We now address the
converse question: if $(\Omega_{X},\Omega_{P})$ is a dual pair of ellipsoids,
is there a quantum covariance ellipsoid with orthogonal projections
$\Omega_{X}$ and $\Omega_{P}$ on $\mathbb{R}_{x}^{n}$ and $\mathbb{R}_{p}^{n}%
$? As follows from the discussion above, such a solution, if it exists, need
not be unique. Let us return to the dual pair of intervals $\Omega_{X}%
=[-\sqrt{2\sigma_{xx}},\sqrt{2\sigma_{xx}}]$\ and $\Omega_{P}=[-\sqrt
{2\sigma_{pp}},\sqrt{2\sigma_{pp}}\dot{]}$ considered in the introduction.
These intervals are the orthogonal projections on the $x$- and $p$-axes,
respectively, of \emph{any} covariance ellipse $\Omega$ defined by
\begin{equation}
\dfrac{\sigma_{pp}}{2D}x^{2}-\frac{\sigma_{xp}}{D}px+\dfrac{\sigma_{xx}}%
{2D}p^{2}\leq1 \label{2Dbis}%
\end{equation}
where $D=\sigma_{xx}\sigma_{pp}-\sigma_{xp}^{2}\geq\frac{1}{4}\hbar^{2}$
(formula (\ref{2D})). Knowledge of the variances $\sigma_{xx}$ and
$\sigma_{pp}$ does not suffice to determine uniquely $\Omega$, since we also
need to know the covariance $\sigma_{xp}^{2}$. We note, however, that every
ellipse (\ref{2Dbis}) has area%
\[
\operatorname*{Area}(\Omega)=2\pi\sqrt{D}\geq\pi\hbar~.
\]
This area condition thus excludes \textquotedblleft thin\textquotedblright%
\ ellipses concentrated along a diagonal of the rectangle $\Omega_{X}%
\times\Omega_{P}$. Suppose that the RSUP is saturated, that is, that
$D=\frac{1}{4}\hbar^{2}$. In this case $\operatorname*{Area}(\Omega)=\pi\hbar$
and the relation $\sigma_{xp}^{2}=\sigma_{xx}\sigma_{pp}-\frac{1}{4}\hbar^{2}$
determines $\sigma_{xp}$ \emph{up to a sign}: the state $\widehat{\rho}$ is
then either of the two pure Gaussians
\begin{equation}
\psi_{\pm}(x)=\left(  \tfrac{1}{2\pi\sigma_{xx}}\right)  ^{1/4}e^{-\frac
{x^{2}}{4\sigma_{xx}}}e^{\pm\frac{i\sigma_{xp}}{2\hbar\sigma_{xx}}x^{2}}
\label{sigmagauss1}%
\end{equation}
whose Fourier transforms are (up to an unimportant constant phase factor with
modulus one)%
\begin{equation}
\widehat{\psi}_{\pm}(p)=\left(  \tfrac{1}{2\pi\sigma_{pp}}\right)
^{1/4}e^{-\frac{p^{2}}{4\sigma_{pp}}}e^{\mp\frac{i\sigma_{xp}}{2\hbar
\sigma_{pp}}p^{2}}~, \label{sigmagauss2}%
\end{equation}
where $\sigma_{pp}>0$ is determined by the relation $\sigma_{xp}^{2}%
=\sigma_{xx}\sigma_{pp}-\frac{1}{4}\hbar^{2}$. Both functions $\psi_{+}$ and
$\psi_{-}=\psi_{+}^{\ast}$ and their Fourier transforms $\widehat{\psi}_{+}$
and $\widehat{\psi}_{-}$ satisfy the conditions $|\psi_{+}(x)|^{2}=|\psi
_{-}(x)|^{2}$ and $|\widehat{\psi}_{+}(p)|^{2}=|\widehat{\psi}_{-}(p)|^{2}$
showing that the Pauli problem does not have a unique solution. In fact the
covariance matrices determined by the states $|\psi_{\pm}\rangle$ are,
respectively,%
\[
\Sigma_{+}=%
\begin{pmatrix}
\sigma_{xx} & \sigma_{xp}\\
\sigma_{px} & \sigma_{pp}%
\end{pmatrix}
\text{ \ },\text{ }\Sigma_{-}=%
\begin{pmatrix}
\sigma_{xx} & -\sigma_{xp}\\
-\sigma_{px} & \sigma_{pp}%
\end{pmatrix}
\]
with $\sigma_{xp}=\sigma_{px}$ and $\sigma_{xx}\sigma_{pp}-\sigma_{xp}%
^{2}=\frac{1}{4}\hbar^{2}$; this yields two covariance ellipses $\Omega_{+}$
and $\Omega_{-}$with area $\pi\hbar$ defined by%
\begin{equation}
\dfrac{\sigma_{pp}}{2D}x^{2}\mp\frac{\sigma_{xp}}{D}px+\dfrac{\sigma_{xx}}%
{2D}p^{2}\leq1~,
\end{equation}
which are symmetric by the reflections $x\rightarrow-x$ or $p\rightarrow-p$.
The projections of these ellipsoids on the $x$ and $p$ axes are in both cases
the polar dual line segments $\Omega_{X}=[-\sqrt{2\sigma_{xx}},\sqrt
{2\sigma_{xx}}]$\ and $\Omega_{P}=[-\sqrt{2\sigma_{pp}},\sqrt{2\sigma_{pp}%
}\dot{]}$.

To deal with the multidimensional case it will be convenient to use some
material from the Wigner formalism.

\subsection{The Wigner and Fourier transforms of Gaussians}

We recall some well-known facts about Gaussian states and their Wigner
transform. For details, proofs and generalizations see for instance
\cite{Wigner} or \cite{Birk,Littlejohn,sisumu}. The most general Gaussian
wavefunction on $\mathbb{R}_{x}^{n}$ can be written%

\begin{equation}
\phi_{WY}(x)=\left(  \tfrac{1}{\pi\hbar}\right)  ^{n/4}(\det W)^{1/4}%
e^{-\tfrac{1}{2\hbar}(W+iY)x^{2}} \label{fay}%
\end{equation}
where $W$ and $Y$ are real symmetric $n\times n$ matrices with $W$ positive
definite. In the case $n=1$ and taking $W=\hbar/2\sigma_{xx}$ and $Y=0$ one
obtains the minimum uncertainty Gaussian
\[
\psi_{0}(x)=(2\pi\sigma_{xx})^{-1/4}e^{-|x|^{2}/4\sigma_{xx}}.
\]

The Wigner transform (\ref{wigtra}) of $\phi_{WY}$ is given by
\cite{Birk,Wigner,Littlejohn,Tronci}
\begin{equation}
W\phi_{WY}(z)=(\pi\hbar)^{-n}e^{-\tfrac{1}{\hbar}Gz\cdot z} \label{phagauss}%
\end{equation}
where $G$ is the symplectic symmetric positive definite matrix
\begin{equation}
G=%
\begin{pmatrix}
W+YW^{-1}Y & YW^{-1}\\
W^{-1}Y & W^{-1}%
\end{pmatrix}
~. \label{gsym}%
\end{equation}
That $G$ indeed is symplectic follows from the observation that $G=S^{T}S$
where
\begin{equation}
S=%
\begin{pmatrix}
W^{1/2} & 0\\
W^{-1/2}Y & W^{-1/2}%
\end{pmatrix}
\label{bi}%
\end{equation}
obviously is in $\operatorname*{Sp}(n)$.

Using standard formulas for the calculation of Gaussian integrals
(\textit{e.g.} Lemma 241 in \cite{Birkbis}) the Fourier transform of
$\phi_{WY}$ is given by
\begin{equation}
\widehat{\phi}_{WY}(p)=\left(  \tfrac{1}{\pi\hbar}\right)  ^{n/4}(\det
W)^{1/4}\det(W+iY)^{-1/2}e^{-\tfrac{1}{2\hbar}(W+iY)^{-1}p^{2}} \label{ft1}%
\end{equation}
where $\det(W+iY)^{-1/2}=\lambda_{1}^{-1/2}\cdot\cdot\cdot\lambda_{n}^{-1/2}$
the $\lambda_{j}^{-1/2}$ being the square roots with positive real parts of
the eigenvalues $\lambda_{j}^{-1}$ of $(W+iY)^{-1}$. Using the elementary
identity \cite{tzon}
\[
(W+iY)^{-1}=(W+YW^{-1}Y)^{-1}-iW^{-1}Y(W+YW^{-1}Y)^{-1}%
\]
which is easily checked multiplying on the right by $W+iY$, we see that in
fact%
\begin{equation}
\widehat{\phi}_{WY}(p)=e^{i\gamma}\phi_{W^{\prime}Y^{\prime}}(p)\text{
\ \textit{with} }\left\{
\begin{array}
[c]{c}%
W^{\prime}=(W+YW^{-1}Y)^{-1}\\
Y^{\prime}=-W^{-1}Y(W+YW^{-1}Y)^{-1}%
\end{array}
\right.  \label{fayf}%
\end{equation}
where $e^{i\gamma}$ ($\gamma$ real) is a constant phase factor.

Setting $\Sigma^{-1}=\tfrac{2}{\hbar}G$ where $G$ is the symplectic matrix
(\ref{gsym}) we can rewrite its Wigner transform (\ref{wigrho}) as
\begin{equation}
W\phi_{WY}(z)=(2\pi)^{-n}\sqrt{\det\Sigma^{-1}}e^{-\frac{1}{2}\Sigma
^{-1}z\cdot z}~. \label{ouafi}%
\end{equation}
The inverse of $G$ being readily calculated using the formula for the inverse
of a symplectic matrix \cite{Birk,Littlejohn} we get the explicit expression
\begin{equation}
\Sigma=\frac{\hbar}{2}%
\begin{pmatrix}
W^{-1} & -W^{-1}Y\\
-YW^{-1} & W+YW^{-1}Y
\end{pmatrix}
~. \label{sigma}%
\end{equation}
Writing $\Sigma$ in block-matrix form (\ref{defcovma}) yields the system of
matrix equations%
\begin{equation}
\Sigma_{XX}=\frac{\hbar}{2}W^{-1}\text{ },\text{ }\Sigma_{XP}=-\frac{\hbar}%
{2}W^{-1}Y\text{ },\text{\ }\Sigma_{PP}=\frac{\hbar}{2}(W+YW^{-1}Y)~.
\label{ident}%
\end{equation}
Note that this system is overcomplete. In fact, the knowledge of partial
covariance matrices allows one to determine the corresponding Gaussian state
by solving the two first equalities (\ref{ident}) in $W$ and $Y$ one gets
\begin{equation}
W=\frac{\hbar}{2}\Sigma_{XX}^{-1}\text{ \ },\text{\ \ }Y=-\Sigma_{XP}%
\Sigma_{XX}^{-1}~; \label{AY}%
\end{equation}
insertion in the third yields
\begin{equation}
\Sigma_{XP}^{2}=\Sigma_{PP}\Sigma_{XX}-\frac{\hbar^{2}}{4}I_{n\times n}~
\label{sigpx}%
\end{equation}
which is the matrix version of the RSUP. Notice that for given $\Sigma_{XX}$
and $\Sigma_{PP}$ the solution $\Sigma_{XP}$ is not unique. We will see
(Theorem \ref{Thm3}) that this non-uniqueness is related to the existence of
\textquotedblleft Pauli partners" in the reconstruction problem.

\subsection{The multidimensional case\label{secrec3}}

\subsubsection{Saturation of the RSUP}

To generalize these constructions to the multidimensional case we begin by
briefly discussing the saturation properties of the RSUP. Assume that
$\widehat{\rho}$ is a Gaussian quantum state, that is, a state with Wigner
distribution
\begin{equation}
W_{\widehat{\rho}}(z)=\left(  \tfrac{1}{2\pi}\right)  ^{n}(\det\Sigma
)^{-1/2}e^{-\frac{1}{2}\Sigma^{-1}z\cdot z} \label{wigrho}%
\end{equation}
where $\Sigma$ satisfies the quantum condition (\ref{Quantum}). The purity of
this state is
\begin{equation}
\mu(\widehat{\rho})=\operatorname*{Tr}(\widehat{\rho}^{2})=\left(  \frac
{\hbar}{2}\right)  ^{n}(\det\Sigma)^{-1/2}~. \label{purity}%
\end{equation}
In view of Williamson's symplectic diagonalization theorem \cite{Birk,sisumu}
there exists $S\in\operatorname*{Sp}(n)$ such that%
\begin{equation}
\Sigma=S^{T}DS\text{ },\text{ }D=%
\begin{pmatrix}
\Lambda & 0\\
0 & \Lambda
\end{pmatrix}
\text{\ \textit{and}\ }\Lambda=\operatorname*{diag}(\nu_{1},...,\nu_{n})
\label{will}%
\end{equation}
with the $\nu_{j}>0$ being the symplectic eigenvalues of $\Sigma$
(\textit{i.e}. the numbers $\pm i\nu_{j}$ are the eigenvalues of $J\Sigma$,
that is, those of the antisymmetric matrix $\Sigma^{1/2}J\Sigma^{1/2}$). The
quantum condition $\Sigma+\frac{i\hbar}{2}J\geq0$ is equivalent to $\nu
_{j}\geq\frac{1}{2}\hbar$ for $j=1,...,n$. The Robertson--Schr\"{o}dinger
inequalities are saturated, that is,
\begin{equation}
\sigma_{x_{j}x_{j}}\sigma_{p_{j}p_{j}}=\sigma_{x_{j},p_{j}}^{2}+\tfrac{1}%
{4}\hbar^{2} \label{RSsat}%
\end{equation}
for $1\leq j\leq n$, if and only if $\nu_{j}=\frac{1}{2}\hbar$ for all $j$,
and this can only be achieved by pure Gaussian states (see \cite{Adesso} and
\cite{Fu}). Formula (\ref{purity}), implying that $\widehat{\rho}$ is a pure
state if and only if $\det\Sigma=(\hbar/2)^{2n}$, means, taking the
factorization (\ref{will}) into account, that we must have $\nu_{1}^{2}%
\cdot\cdot\cdot\nu_{n}^{2}=(\hbar/2)^{2n}$; since $\nu_{j}\geq\frac{1}{2}%
\hbar$ for all $j$ we must in fact have $\nu_{1}=\cdot\cdot\cdot=\nu_{n}%
=\frac{1}{2}\hbar$ so that the covariance matrix has the very particular form
\begin{equation}
\Sigma=\frac{1}{2}\hbar S^{T}S\text{ \ },\text{ \ }S\in\operatorname*{Sp}(n)
\label{sigmasym}%
\end{equation}
(this is equivalent to saying that the covariance ellipsoid is a quantum
blob). The saturating states are thus those with Wigner distribution
\[
W_{\widehat{\rho}}(z)=(\pi\hbar)^{-n}e^{-\frac{1}{\hbar}(S^{T}S)^{-1}z\cdot z}%
\]
hence the state is the Gaussian $\phi_{WY}(x)$ defined by (\ref{fay}). Let
\[
\phi_{0}(x)=(\pi\hbar)^{-n/4}e^{-|x|^{2}/2\hbar}%
\]
be the standard (normalized) Gaussian state; its Wigner distribution is
\cite{Birk,Tronci,Littlejohn}
\begin{equation}
W\phi_{0}(z)=(\pi\hbar)^{-n}e^{-\frac{1}{\hbar}|z|^{2}} \label{wfi0}%
\end{equation}
hence $W_{\widehat{\rho}}(z)=W\phi_{0}(S^{-1}z)$ and it follows from the
symplectic covariance properties of the Wigner transform
\cite{Birk,Littlejohn} that the state is the Gaussian $\psi=\widehat{S}%
\phi_{0}$ where $\widehat{S}$ is a unitary operator (a metaplectic operator)
associated with $S\in\operatorname*{Sp}(n)$ via the metaplectic representation
of the symplectic group (see \cite{blob} or \cite{Birk,Littlejohn} for
detailed descriptions of this method; note that in particular this shows that
all pure Gaussian states can be obtained from each other using only the
metaplectic group, in fact a subgroup thereof \cite{ATMP}). This discussion
can be summarized as follows:%
\begin{gather}
\text{\textit{The saturation of the RSUP is equivalent to the statement
\textquotedblleft}}\Omega\text{ \textit{is a }}\label{blob}\\
\text{\textit{quantum blob\textquotedblright, i.e. there exists }}%
S\in\operatorname*{Sp}(n)\text{ \textit{such that }}\Omega=S(\mathcal{B}%
^{2n}(\sqrt{\hbar}))~.\nonumber
\end{gather}

Note that if $S=I_{2n\times2n}$ then $\Omega=\mathcal{B}^{2n}(\sqrt{\hbar})$
so that the corresponding Gaussian is a minimum uncertainty state saturating
the Heisenberg inequality. In fact, property (\ref{blob}) says that every
Gaussian can be reduced to such a minimal state using a symplectic
transformation \cite{blob}.

\subsubsection{The reconstruction theorem: the saturated case $P=X^{\hbar}$}

The following intertwining lemma will allow us to reduce the study of the
reconstruction problem to a \textquotedblleft canonical\textquotedblright%
\ form. Recall from Section \ref{secgen} that the matrices
\[
M_{L}=%
\begin{pmatrix}
L^{-1} & 0\\
0 & L^{T}%
\end{pmatrix}
\text{ \ },\text{ \ }\det L\neq0
\]
are symplectic .

\begin{lemma}
\label{LemmaML}Let $\Omega$ be the phase space ellipsoid defined by
$Mz^{2}\leq\hbar$, $M>0$. Let $\Pi_{X}$ and $\Pi_{P}$ be the orthogonal
projections\ of $\mathbb{R}_{z}^{2n}$ onto $\mathbb{R}_{x}^{n}$ and
$\mathbb{R}_{p}^{n}$. We have
\begin{equation}
(\Pi_{X}\times\Pi_{P})M_{L}=M_{L}(\Pi_{X}\times\Pi_{P}) \label{inter}%
\end{equation}
that is
\begin{equation}
\Pi_{X}(M_{L}(\Omega))=L^{-1}\Pi_{X}\Omega\text{ \ and \ }\Pi_{P}(M_{L}%
(\Omega))=L^{T}\Pi_{P}\Omega~. \label{rescaling}%
\end{equation}

\end{lemma}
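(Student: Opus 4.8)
The plan is to verify the intertwining relation \eqref{inter} directly from the block structure of $M_L$ and the projections, and then to read off \eqref{rescaling} as a consequence. Recall that $M_L = \begin{pmatrix} L^{-1} & 0 \\ 0 & L^T \end{pmatrix}$ acts on $\mathbb{R}_z^{2n} = \mathbb{R}_x^n \times \mathbb{R}_p^n$ by sending $(x,p)$ to $(L^{-1}x, L^T p)$, and that $\Pi_X \times \Pi_P$ is just the identity written in block form as $\begin{pmatrix} I_{n\times n} & 0 \\ 0 & I_{n\times n} \end{pmatrix}$ acting componentwise — more precisely $\Pi_X$ kills the $p$-component and $\Pi_P$ kills the $x$-component. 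Since $M_L$ is block-diagonal with respect to the splitting $\mathbb{R}_x^n \oplus \mathbb{R}_p^n$, it commutes with any operator that is diagonal with respect to the same splitting, and both $\Pi_X$ and $\Pi_P$ are of this type; hence \eqref{inter} holds essentially by inspection. This step is routine linear algebra and presents no obstacle.

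First I would make precise what $\Pi_X(M_L(\Omega))$ means: for $z = (x,p) \in \Omega$, we have $M_L z = (L^{-1}x, L^T p)$, so $\Pi_X(M_L z) = L^{-1}x = L^{-1}(\Pi_X z)$. Letting $z$ range over $\Omega$, the set $\Pi_X(M_L(\Omega))$ equals $\{L^{-1}x : x \in \Pi_X\Omega\} = L^{-1}(\Pi_X\Omega)$, which is the first identity in \eqref{rescaling}. The second identity is obtained identically, replacing $\Pi_X$ by $\Pi_P$ and $L^{-1}$ by $L^T$: $\Pi_P(M_L z) = L^T p = L^T(\Pi_P z)$, so $\Pi_P(M_L(\Omega)) = L^T(\Pi_P\Omega)$. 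The abstract statement \eqref{inter} is then just the operator-level reformulation of these two set-level identities, once one notes that $\Pi_X \times \Pi_P$ denotes the pair of projections acting in parallel.

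The only point requiring a word of care — and the closest thing to an ``obstacle'' here — is purely notational: the symbol $\Pi_X \times \Pi_P$ must be interpreted as the operator on $\mathbb{R}_z^{2n}$ whose action on the $x$-block is $\Pi_X$ and on the $p$-block is $\Pi_P$, so that \eqref{inter} is an equality of maps $\mathbb{R}_z^{2n} \to \mathbb{R}_x^n \times \mathbb{R}_p^n$; with this reading the verification is the one-line block computation above. I would also remark that $\Omega$ being an ellipsoid of the form $\{Mz^2 \le \hbar\}$ with $M>0$ is not actually used for \eqref{inter} — the intertwining is valid for any subset — but it guarantees that $M_L(\Omega) = \{(M_L^{-T} M M_L^{-1})z^2 \le \hbar\}$ is again an ellipsoid of the same type, which is what makes the lemma useful for reducing the reconstruction problem to a canonical form in the sequel. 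No deeper input (symplectic geometry, the quantum condition, Williamson's theorem) is needed for this particular lemma.
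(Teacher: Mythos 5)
Your proof is correct, and it takes a genuinely different --- and more elementary --- route than the paper. The paper's own argument describes $M_{L}(\Omega)$ as the ellipsoid $\{z:M^{\prime}z^{2}\leq\hbar\}$ with $M^{\prime}=(M_{L}^{T})^{-1}MM_{L}^{-1}$, computes the Schur complements $M^{\prime}/M_{PP}^{\prime}=L^{T}(M/M_{PP})L$ and $M^{\prime}/M_{XX}^{\prime}=L^{-1}(M/M_{XX})(L^{T})^{-1}$, and then invokes Lemma \ref{LemmaProj} (formulas (\ref{boundb}) and (\ref{bounda})) to identify the projections. You instead exploit the block-diagonal form of $M_{L}$ pointwise: $\Pi_{X}(M_{L}z)=L^{-1}\Pi_{X}z$ and $\Pi_{P}(M_{L}z)=L^{T}\Pi_{P}z$, so the set-level identities (\ref{rescaling}) follow at once, and --- as you correctly observe --- for arbitrary subsets of $\mathbb{R}_{z}^{2n}$, the ellipsoid hypothesis playing no role in the intertwining itself. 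What your route buys is brevity and generality, with no appeal to Schur complements or to Lemma \ref{LemmaProj}; what the paper's route buys is the explicit matrices of the projected ellipsoids, namely $L^{T}(M/M_{PP})L$ and $L^{-1}(M/M_{XX})(L^{T})^{-1}$, which is convenient bookkeeping for the reduction to canonical form in Theorems \ref{Thm2} and \ref{Thm3} (though your argument recovers the same data, since the image of an ellipsoid under $L^{-1}$ or $L^{T}$ is an ellipsoid with the correspondingly transformed matrix). Your notational caveat about reading $\Pi_{X}\times\Pi_{P}$ blockwise is also the right way to make sense of (\ref{inter}).
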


\begin{proof}
The ellipsoid $M_{L}(\Omega)$ is defined by $M^{\prime}z^{2}\leq\hbar$ where
$M^{\prime}=(M_{L}^{T})^{-1}MM_{L}^{-1}$; a direct calculation shows that the
Schur complements $M^{\prime}/M_{PP}^{\prime}$ and $M^{\prime}/M_{XX}^{\prime
}$ are given by $M^{\prime}/M_{PP}^{\prime}=L^{T}(M/M_{PP})L$ and $M^{\prime
}/M_{XX}^{\prime}=L^{-1}(M/M_{XX})(L^{T})^{-1}$. Formula (\ref{rescaling})
follows using (\ref{boundb}) and (\ref{bounda}).
\end{proof}

Before we proceed to prove the main result of this section, let us recall
(\cite{Birk}, \S 2.1) that a block matrix
\[
M=%
\begin{pmatrix}
M_{XX} & M_{XP}\\
M_{PX} & M_{PP}%
\end{pmatrix}
\]
is symplectic if and only if its blocks satisfy the relations%
\begin{subequations}
\begin{gather}
M_{XX}^{T}M_{PP}-M_{PX}^{T}M_{XP}=I_{n\times n}\label{ABCD1}\\
M_{XX}^{T}M_{PX}\text{ \textit{and} }M_{XP}^{T}M_{PP}\text{ \textit{symmetric}%
.} \label{ABCD2}%
\end{gather}
Also recall (\ref{Dual}) that if $X=\{x:Ax^{2}\leq\hbar\}$ and $P=\{p:Bp^{2}%
\leq\hbar\}$ with $A,B$ symmetric and positive definite, then $(X,P)$ is a
saturated dual pair if and only if $AB=I_{n\times n}$.
\end{subequations}
\begin{theorem}
\label{Thm2}Let $X=\{x:Ax^{2}\leq\hbar\}$ and $X^{\hbar}=\{p:A^{-1}p^{2}%
\leq\hbar\}$ its quantum polar dual.

(i) The product $X\times X^{\hbar}$ contains exactly one quantum blob
$\Omega=S(\mathcal{B}^{2n}(\sqrt{\hbar}))$, $S\in\operatorname*{Sp}(n)$, with
orthogonal projections $X$ and $X^{\hbar}$ on $\mathbb{R}_{x}^{n}$ and
$\mathbb{R}_{p}^{n}$; that quantum blob is the ellipsoid with the largest
volume inscribed in the convex set $X\times X^{\hbar}$;

(ii) $\Omega$ is the covariance ellipsoid of the pure Gaussian state
\begin{equation}
\psi(x)=\left(  \tfrac{1}{2\pi}\right)  ^{n/4}(\det\Sigma_{XX})^{-1/4}%
e^{-\tfrac{1}{4}\Sigma_{XX}^{-1}x\cdot x} \label{psix}%
\end{equation}
where $\Sigma_{XX}=\frac{\hbar}{2}A^{-1}$.
\end{theorem}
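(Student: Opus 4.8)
The plan is to produce an explicit candidate quantum blob, recognize it as the John ellipsoid of $X\times X^{\hbar}$ (which delivers the maximal-volume property and uniqueness simultaneously), and then read the Gaussian off from the associated covariance matrix. Throughout I write $\omega_{k}$ for the volume of the unit ball in $\mathbb{R}^{k}$.

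\emph{Step 1: the candidate.} Take $M=\begin{pmatrix}A & 0\\ 0 & A^{-1}\end{pmatrix}$ and set $\Omega=\{z:Mz^{2}\leq\hbar\}$. Since $Ax^{2}+A^{-1}p^{2}\leq\hbar$ forces both $Ax^{2}\leq\hbar$ and $A^{-1}p^{2}\leq\hbar$, one has $\Omega\subset X\times X^{\hbar}$; because $M$ is block-diagonal, Lemma \ref{LemmaProj} gives $\Omega_{X}=\{x:Ax^{2}\leq\hbar\}=X$ and $\Omega_{P}=\{p:A^{-1}p^{2}\leq\hbar\}=X^{\hbar}$. Moreover $M^{-1}=SS^{T}$ with $S=M_{A^{1/2}}=\begin{pmatrix}A^{-1/2} & 0\\ 0 & A^{1/2}\end{pmatrix}\in\operatorname*{Sp}(n)$, so $\Omega=S(\mathcal{B}^{2n}(\sqrt{\hbar}))$ is a quantum blob, and $\operatorname*{Vol}\Omega=\omega_{2n}\hbar^{n}$ since $\det S=1$.

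\emph{Step 2: maximality and uniqueness.} This is where I expect the real work to be. As $X\times X^{\hbar}$ is centrally symmetric its John ellipsoid (the unique inscribed ellipsoid of maximal volume) is centered, so it suffices to compare $\Omega$ with centered ellipsoids $E=\{z:Qz^{2}\leq1\}$, $Q>0$. Writing $Q$ in blocks, Lemma \ref{LemmaProj} gives $\operatorname*{Vol}E=\omega_{2n}(\det Q)^{-1/2}$, $\operatorname*{Vol}(\Pi_{X}E)=\omega_{n}\det(Q/Q_{PP})^{-1/2}$ and $\operatorname*{Vol}(\Pi_{P}E)=\omega_{n}\det(Q/Q_{XX})^{-1/2}$. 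Combining the Schur identity $\det Q=\det Q_{PP}\det(Q/Q_{PP})=\det Q_{XX}\det(Q/Q_{XX})$ with Fischer's inequality $\det Q\leq\det Q_{XX}\det Q_{PP}$ (which follows from $\det Q=\det Q_{XX}\det(Q/Q_{XX})$ and $Q/Q_{XX}\leq Q_{PP}$) yields $\det(Q/Q_{PP})\det(Q/Q_{XX})\leq\det Q$, hence
\[
\operatorname*{Vol}E\leq\omega_{2n}\,\omega_{n}^{-2}\operatorname*{Vol}(\Pi_{X}E)\operatorname*{Vol}(\Pi_{P}E).
\]
If $E\subset X\times X^{\hbar}$ then $\Pi_{X}E\subset X$ and $\Pi_{P}E\subset X^{\hbar}$, so $\operatorname*{Vol}(\Pi_{X}E)\operatorname*{Vol}(\Pi_{P}E)\leq\operatorname*{Vol}X\cdot\operatorname*{Vol}X^{\hbar}=\omega_{n}^{2}\hbar^{n}$ (using $\det A\cdot\det A^{-1}=1$), whence $\operatorname*{Vol}E\leq\omega_{2n}\hbar^{n}=\operatorname*{Vol}\Omega$. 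I would then chase the equality case: Fischer equality forces $Q_{XP}=0$, and then $\Pi_{X}E=X$ and $\Pi_{P}E=X^{\hbar}$ force $Q_{XX}=A/\hbar$ and $Q_{PP}=A^{-1}/\hbar$, i.e. $Q=\hbar^{-1}M$ and $E=\Omega$. Thus $\Omega$ is the unique maximal-volume inscribed ellipsoid of $X\times X^{\hbar}$, i.e. its John ellipsoid. Finally, any quantum blob $\Omega'\subset X\times X^{\hbar}$ with projections $X$ and $X^{\hbar}$ is an inscribed ellipsoid of volume $\omega_{2n}\hbar^{n}$, hence attains the maximum, hence equals $\Omega$; this proves (i).

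\emph{Step 3: the Gaussian.} The covariance matrix attached to $\Omega$ via $M=\tfrac{\hbar}{2}\Sigma^{-1}$ is $\Sigma=\tfrac{\hbar}{2}\begin{pmatrix}A^{-1} & 0\\ 0 & A\end{pmatrix}$, so $\Sigma_{XX}=\tfrac{\hbar}{2}A^{-1}$, $\Sigma_{XP}=0$, $\Sigma_{PP}=\tfrac{\hbar}{2}A$. Solving (\ref{AY}) gives $W=\tfrac{\hbar}{2}\Sigma_{XX}^{-1}=A$ and $Y=-\Sigma_{XP}\Sigma_{XX}^{-1}=0$, and the third equation in (\ref{ident}) is then satisfied trivially; hence $\Omega$ is the covariance ellipsoid of the pure Gaussian $\phi_{A,0}$, which by (\ref{fay}) equals $(\pi\hbar)^{-n/4}(\det A)^{1/4}e^{-\tfrac{1}{2\hbar}Ax^{2}}$. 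A one-line comparison of exponents and prefactors, using $\Sigma_{XX}^{-1}=\tfrac{2}{\hbar}A$ and $(\det\Sigma_{XX})^{-1/4}=(\hbar/2)^{-n/4}(\det A)^{1/4}$, identifies this with (\ref{psix}), proving (ii). The only genuinely non-formal ingredient in the whole argument is the determinant/Fischer inequality together with its equality analysis in Step 2; the rest is bookkeeping with the generators $M_{L}$ and the Gaussian formulas already established.
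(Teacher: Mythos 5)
Your argument is correct, and for the decisive step it takes a genuinely different route from the paper. The paper's proof first transports the pair $(X,X^{\hbar})$ to the pair of balls $(\mathcal{B}_{X}^{n}(\sqrt{\hbar}),\mathcal{B}_{P}^{n}(\sqrt{\hbar}))$ by the symplectic rescaling $M_{A^{-1/2}}$ (using Lemma \ref{LemmaML}), identifies the John ellipsoid of $\mathcal{B}_{X}^{n}(\sqrt{\hbar})\times\mathcal{B}_{P}^{n}(\sqrt{\hbar})$ as $\mathcal{B}^{2n}(\sqrt{\hbar})$ by a symmetry argument, deduces uniqueness of the quantum blob from the uniqueness of the John ellipsoid plus volume preservation of symplectic maps, and finally recovers the Gaussian by conjugating $\phi_{0}$ with metaplectic operators via the covariance table (\ref{Table1}). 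You instead stay with the explicit ellipsoid $Ax^{2}+A^{-1}p^{2}\leq\hbar$ and prove maximality and uniqueness directly from the determinant (Fischer) inequality $\det(Q/Q_{PP})\det(Q/Q_{XX})\leq\det Q$, which bounds the volume of any inscribed centered ellipsoid by a constant times the product of the volumes of its projections; your equality analysis (Fischer equality forces $Q_{XP}=0$, and then the projection constraints pin down $Q_{XX}$ and $Q_{PP}$) replaces the paper's appeal to the symmetry characterization of the John ellipsoid, and both proofs use the fixed volume of quantum blobs in the same way to conclude uniqueness among blobs. What your route buys is self-containedness: no reduction lemma, no unproved ``simple symmetry argument'', and the equality discussion you sketch only needs the standard facts that equality in Fischer forces the off-diagonal block to vanish and that nested convex bodies of equal volume coincide, both of which are fine (it would be worth writing these two lines out rather than saying you ``would chase'' them). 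Your Step 3, solving (\ref{AY}) directly from $M=\operatorname*{diag}(A,A^{-1})$, is also cleaner than the paper's transport argument and lands exactly on $\Sigma_{XX}=\frac{\hbar}{2}A^{-1}$ as in (\ref{psix}); note that the paper's own final lines contain a notational slip (ending with $\Sigma_{XX}=\frac{\hbar}{2}A^{2}$, inconsistent with the statement), which your computation avoids.
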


\begin{proof}
The symplectic transformation $M_{A^{-1/2}}$ takes the dual pair $(X,X^{\hbar
})$ to the dual pair $(\mathcal{B}_{X}^{n}(\sqrt{\hbar}),\mathcal{B}_{P}%
^{n}(\sqrt{\hbar}))$:
\begin{equation}
(X^{\prime},X^{\prime\hbar})=M_{A^{-1/2}}(X\times X^{\hbar})=(\mathcal{B}%
_{X}^{n}(\sqrt{\hbar})\times\mathcal{B}_{P}^{n}(\sqrt{\hbar}))~. \label{mlxp1}%
\end{equation}
In view of Lemma \ref{LemmaML} above, this has the effect of replacing the
projections $X$ and $X^{\hbar}$ with $\mathcal{B}_{X}^{n}(\sqrt{\hbar})$ and
$\mathcal{B}_{P}^{n}(\sqrt{\hbar})$. By a simple symmetry argument it is seen
that the John ellipsoid (which is the inscribed ellipsoid with largest volume
\cite{Ball,Schneider}) of $\mathcal{B}_{X}^{n}(\sqrt{\hbar})\times
\mathcal{B}_{P}^{n}(\sqrt{\hbar})$ is the phase space ball $\mathcal{B}%
^{2n}(\sqrt{\hbar})$. In view of the uniqueness of the John ellipsoid there is
no other quantum blob contained in $X^{\prime}\times X^{\prime\hbar}$: assume
we can find $S^{\prime}\in\operatorname*{Sp}(n)$ such that $S^{\prime
(}\mathcal{B}^{2n}(\sqrt{\hbar}))\subset X^{\prime}\times X^{\prime\hbar}$.
Since $S^{\prime}$ is volume preserving $S^{\prime}(\mathcal{B}^{2n}%
(\sqrt{\hbar})$ has same volume as $\mathcal{B}^{2n}(\sqrt{\hbar})$ so we must
have $S^{\prime}(\mathcal{B}^{2n}(\sqrt{\hbar})=\mathcal{B}^{2n}(\sqrt{\hbar
})$. The orthogonal projections of $\mathcal{B}^{2n}(\sqrt{\hbar})$ on
$\mathbb{R}_{x}^{n}$ and $\mathbb{R}_{p}^{n}$ being $\mathcal{B}_{X}^{n}%
(\sqrt{\hbar})$ and $\mathcal{B}_{P}^{n}(\sqrt{\hbar})$, respectively, we
conclude that the covariance ellipsoid we are looking for is precisely
$\Omega=\mathcal{B}^{2n}(\sqrt{\hbar})$. It corresponds to the standard
Gaussian $\phi_{0}(x)=(\pi\hbar)^{-n/4}e^{-|x|^{2}/2\hbar}$ whose Wigner
distribution is given by
\begin{equation}
W\phi_{0}(z)=(\pi\hbar)^{-n}e^{-\frac{1}{\hbar}|z|^{2}}~.
\end{equation}
Returning to the original dual pair $(X,X^{\hbar})$ using (\ref{mlxp1}) the
covariance ellipsoid is here
\[
\Omega=M_{A^{1/2}}(\mathcal{B}^{2n}(\sqrt{\hbar}))=\{z:M_{A^{-1}}z\cdot
z\leq\hbar\}~.
\]
Specializing the transformation table (\ref{Table1}) to $S=M_{A^{-1}}$\ we
have the correspondences%
\begin{equation}%
\begin{tabular}
[c]{|l|l|l|l|}\hline
$\Omega$ & $\Sigma$ & $W_{\widehat{\rho}}$ & $\widehat{\rho}$\\\hline
$M_{A^{-1}}(\Omega)$ & $M_{A^{1/2}}\Sigma M_{A^{1/2}}$ & $W_{\widehat{\rho}%
}\circ M_{A}$ & $\widehat{M}_{A^{-1},0}\widehat{\rho}\widehat{M}_{A,0}%
$\\\hline
\end{tabular}
\end{equation}
hence the state with covariance matrix $\Omega$ is the squeezed Gaussian
$\psi$ with Wigner transform
\[
W\psi(z)=(\pi\hbar)^{-n}e^{-\frac{1}{\hbar}M_{A^{T}}M_{A}z\cdot z}=(\pi
\hbar)^{-n}e^{-\frac{1}{\hbar}M_{A^{2}}z\cdot z}~.
\]
Setting $G=M_{A^{T}}M_{A}=M_{A^{2}}$ we have
\[%
\begin{pmatrix}
W+YW^{-1}Y & YW^{-1}\\
W^{-1}Y & W^{-1}%
\end{pmatrix}
=%
\begin{pmatrix}
(A^{2})^{-1} & 0\\
0 & A^{2}%
\end{pmatrix}
\]
hence $W=(A^{2})^{-1}$ and $Y=0$. In view of formulas (\ref{ouafi}) and
(\ref{sigma}) the state we are looking for is
\[
\phi_{(A^{2})^{-1}0}(x)=\left(  \tfrac{1}{\pi\hbar}\right)  ^{n/4}(\det
A)^{-1/2}e^{-\tfrac{1}{2\hbar}(A^{2})^{-1}x\cdot x}~;
\]
taking formula (\ref{ident}) into account this can be rewritten
\[
\psi(x)=\phi_{(A^{2})^{-1}0}(x)=\left(  \tfrac{1}{2\pi}\right)  ^{n/4}%
(\det\Sigma_{XX})^{-1/4}e^{-\tfrac{1}{4}\Sigma_{XX}^{-1}x\cdot x}%
\]
where $\Sigma_{XX}=\frac{\hbar}{2}W^{-1}=\frac{\hbar}{2}A^{2}$.
\end{proof}

\subsubsection{The reconstruction theorem in the general case}

We now consider the case $X^{\hbar}\subset P$, $X^{\hbar}\neq P$.

\begin{theorem}
\label{Thm3}Let $X=\{x:Ax^{2}\leq\hbar\}$ and $P=\{p:Bp^{2}\leq\hbar\}$ be two
ellipsoids such that $X^{\hbar}\subset P$, $X\neq P$.

(i) The product $X\times P$ contains two quantum blobs, i.e. two (centered)
ellipsoids $\Omega_{+}$ and $\Omega_{-}$ such that $\Omega_{\pm}=S_{\pm
}(\mathcal{B}^{2n}(\sqrt{\hbar}))$ for some $S_{\pm}\in\operatorname*{Sp}(n)$
and whose orthogonal projections are $X$ and $P$. These ellipsoids are the
covariance ellipsoids of two pure Gaussian quantum states explicitly given by
the formula%
\[
\psi_{\pm}(x)=\left(  \tfrac{1}{2\pi}\right)  ^{n/4}(\det\Sigma_{XX}%
)^{-1/4}\exp\left[  -\left(  \frac{1}{4}\Sigma_{XX}^{-1}\pm\frac{i}{2\hbar
}\Sigma_{XP}\Sigma_{XX}^{-1}\right)  x^{2}\right]
\]
where $\Sigma_{XX}$ and $\Sigma_{XP}$ are the $n\times n$ matrices defined
by:
\[
\Sigma_{XX}=\frac{\hbar}{2}A^{-1}\text{ },\text{ }\Sigma_{XP}=\frac{\hbar}%
{2}(B^{-1}A^{-1}-I_{n\times n})^{1/2}~.
\]

(ii) Let $\Omega=\{z:\tfrac{1}{2}\Sigma^{-1}z\cdot z\leq1\}$ be the ellipsoid
with largest volume contained in $X\times P$ and having projections $X$ and
$P$; the quantum state with Wigner distribution
\[
W_{\widehat{\rho}}(z)=\left(  \tfrac{1}{2\pi}\right)  ^{n}(\det\Sigma
)^{-1/2}e^{-\frac{1}{2}\Sigma^{-1}z\cdot z}%
\]
is a mixed state with purity $\mu(\widehat{\rho})=\lambda_{j_{1}}^{2}%
\cdot\cdot\cdot\lambda_{j_{m}}^{2}$ where the $\lambda_{j_{k}}$ are the
eigenvalues of $AB$ that are smaller than one.
\end{theorem}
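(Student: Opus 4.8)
The plan is to reduce to the canonical situation handled in Theorem \ref{Thm2} by applying a well-chosen symplectic transformation, mirroring the one-dimensional discussion in Section \ref{secpauli}. First I would diagonalize the pair $(A,B)$ using Lemma \ref{ABL}: choose an invertible $L$ with $L^{T}AL=L^{-1}B(L^{T})^{-1}=\Lambda=\operatorname*{diag}(\sqrt{\lambda_{1}},\dots,\sqrt{\lambda_{n}})$, where the $\lambda_{j}\leq 1$ are the eigenvalues of $AB$ (the hypothesis $X^{\hbar}\subset P$ forces $AB\leq I_{n\times n}$ by \eqref{Dual}, so indeed $\lambda_{j}\leq 1$). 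Using the intertwining Lemma \ref{LemmaML}, the symplectic map $M_{L}$ replaces $(X,P)$ by $(X',P')$ with $X'=\{x:\sum\lambda_{j}^{1/2}x_{j}^{2}\leq\hbar\}$ and $P'=\{p:\sum\lambda_{j}^{1/2}p_{j}^{2}\leq\hbar\}$. So after a further diagonal rescaling the problem decouples over the $n$ conjugate coordinate planes, and in each plane we are in exactly the situation of the $n=1$ analysis: a rectangle $[-a_{j},a_{j}]\times[-b_{j},b_{j}]$ with $a_{j}b_{j}\geq\hbar$, equality iff $\lambda_{j}=1$.

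For part (i): in the plane $j$ one seeks the quantum blobs (area-$\pi\hbar$ ellipses) inscribed in the rectangle with the two prescribed axis-projections. I would write the ellipse as $\tfrac12\Sigma_j^{-1}z\cdot z\leq 1$ with $2\times2$ covariance $\Sigma_j=\begin{pmatrix}\sigma_{xx}^{(j)} & \sigma_{xp}^{(j)}\\ \sigma_{xp}^{(j)} & \sigma_{pp}^{(j)}\end{pmatrix}$; the projections fix $\sigma_{xx}^{(j)}=\tfrac12 a_j^2$ and $\sigma_{pp}^{(j)}=\tfrac12 b_j^2$, the quantum-blob condition $\det\Sigma_j=\tfrac14\hbar^2$ then determines $\sigma_{xp}^{(j)}$ up to sign via $(\sigma_{xp}^{(j)})^2=\sigma_{xx}^{(j)}\sigma_{pp}^{(j)}-\tfrac14\hbar^2$. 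This gives exactly two ellipses $\Omega_\pm^{(j)}$ per plane; assembling them over $j$ and pulling back through $M_L$ (using the transformation table \eqref{Table1}) yields two phase-space quantum blobs $\Omega_\pm=S_\pm(\mathcal B^{2n}(\sqrt\hbar))$. Translating the covariance data back through \eqref{AY}–\eqref{sigpx} gives $\Sigma_{XX}=\tfrac{\hbar}{2}A^{-1}$ and $\Sigma_{XP}=\tfrac{\hbar}{2}(B^{-1}A^{-1}-I_{n\times n})^{1/2}$ (the square root being well-defined and nonnegative because $AB\leq I$ forces $B^{-1}A^{-1}\geq I$), and then the Gaussian-state formulas \eqref{fay}, \eqref{ident} produce precisely the claimed $\psi_\pm(x)$, with the two signs of $\Sigma_{XP}$ corresponding to the two blobs.

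For part (ii): I would identify the maximal-volume inscribed ellipsoid with prescribed projections $X$ and $P$ as the John ellipsoid of $X\times P$. After the reduction via $M_L$ the body is a product of the scaled rectangle, so the John ellipsoid is (again by a symmetry argument as in Theorem \ref{Thm2}) the product of the John ellipses of each planar rectangle, i.e. the ellipse $\sigma_{xp}^{(j)}=0$, $\sigma_{xx}^{(j)}=\tfrac12 a_j^2$, $\sigma_{pp}^{(j)}=\tfrac12 b_j^2$, whose area is $\tfrac{\pi}{2}a_j b_j=\tfrac{\pi\hbar}{2}\lambda_j^{-1/2}$. Williamson-diagonalizing the resulting $\Sigma$ shows its symplectic eigenvalues are $\nu_j=\tfrac{\hbar}{2}\lambda_j^{-1/2}$; the quantum condition $\nu_j\geq\tfrac\hbar2$ holds since $\lambda_j\leq 1$, so $\Omega$ is a genuine quantum covariance ellipsoid, and by \eqref{purity} the purity of the associated Gaussian state is $(\hbar/2)^{n}(\det\Sigma)^{-1/2}=\prod_j(\hbar/2\nu_j)=\prod_j\lambda_j^{1/2}$. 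Wait — I need to be careful: the exponents. Recomputing, $\mu=\prod_{j=1}^{n}\tfrac{\hbar}{2\nu_j}=\prod_{j=1}^{n}\lambda_j^{1/2}$; the stated answer $\lambda_{j_1}^2\cdots\lambda_{j_m}^2$ over the eigenvalues strictly less than one should then emerge once the factors with $\lambda_j=1$ are dropped and the remaining purity is expressed in the normalization the authors use — reconciling my $\prod\lambda_j^{1/2}$ with their $\prod\lambda_{j_k}^{2}$ is the one bookkeeping point I would check carefully against their conventions for $\Sigma$ versus $M=\tfrac\hbar2\Sigma^{-1}$ and the definition of purity.

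The main obstacle I anticipate is not the symplectic reduction — that is essentially Lemma \ref{LemmaML} plus Lemma \ref{ABL} — but rather the uniqueness/optimality claims: proving that $X\times P$ contains \emph{exactly} two quantum blobs with the prescribed projections (not more), and that the maximal-volume ellipsoid is genuinely the John ellipsoid and is unique. Both rest on the uniqueness of the John ellipsoid together with the volume-preserving property of symplectic maps, exactly as in the proof of Theorem \ref{Thm2}; the extra subtlety here is that when some $\lambda_j<1$ the inscribed quantum blob is no longer the John ellipsoid in those planes, so one must argue plane-by-plane that volume-$\tfrac{\pi\hbar}{2}$ ellipses inscribed in a strictly larger rectangle with fixed projections form exactly a two-point family, parametrized by $\operatorname{sign}\sigma_{xp}^{(j)}$.
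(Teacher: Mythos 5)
Your proposal reaches the same two Gaussian states and the same purity computation as the paper, but by a partly different route. In part (i) the paper does not reduce at all: it characterizes a quantum blob with the prescribed projections by the condition that $M=\frac{\hbar}{2}\Sigma^{-1}$ be symplectic and positive definite, uses the block relations (\ref{ABCD1})--(\ref{ABCD2}) to get $M/M_{PP}=M_{PP}^{-1}$ and $M/M_{XX}=M_{XX}^{-1}$, hence $\Sigma_{XX}=\frac{\hbar}{2}A^{-1}$, $\Sigma_{PP}=\frac{\hbar}{2}B^{-1}$, and then solves the system (\ref{AY})--(\ref{sigpx}) for the Gaussians $\phi_{WY}$. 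Your alternative — Lemma \ref{ABL} plus Lemma \ref{LemmaML}, planar construction, pull back through $M_{L}$ and the table (\ref{Table1}) — does produce two genuine quantum blobs with projections $X$ and $P$, and the pulled-back cross block squares to $\frac{\hbar^{2}}{4}(A^{-1}B^{-1}-I_{n\times n})$, which is the correctly ordered form of (\ref{sigpx}); so the existence part of (i) goes through and is arguably more concrete than the paper's direct algebra. In part (ii) your argument is the paper's argument step for step.

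Two caveats. First, the uniqueness you announce as the main remaining task — that $X\times P$ contains \emph{exactly} two quantum blobs with projections $X$ and $P$ — is not claimed by the theorem, cannot be obtained plane by plane, and is in fact false for $n\geq2$: a blob with the prescribed projections need not split as a product of planar ellipses, since only the diagonal blocks $\Sigma_{XX},\Sigma_{PP}$ are pinned down by the projections. For instance with $A=B=\varepsilon I_{n\times n}$, $\varepsilon<1$, every symmetric $Y$ with $Y^{2}=(1-\varepsilon^{2})I_{n\times n}$ yields via (\ref{ident}) a pure Gaussian whose covariance ellipsoid projects onto $X$ and $P$, and for $n\geq2$ there is a continuum of such $Y$ (the theorem's $\psi_{\pm}$ correspond to $Y=\mp\sqrt{1-\varepsilon^{2}}\,I_{n\times n}$). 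So drop the ``exactly two'' ambition; only existence is asserted and needed. Second, your purity bookkeeping is right and should not be ``reconciled'' with the stated exponent: the John ellipsoid of the reduced product has covariance matrix $\frac{\hbar}{2}\operatorname{diag}(\Lambda^{-1},\Lambda^{-1})$, symplectic eigenvalues $\frac{\hbar}{2}\lambda_{j}^{-1/2}$, and (\ref{purity}) then gives $\mu(\widehat{\rho})=\prod_{j}\lambda_{j}^{1/2}$, exactly your value; the exponent $\lambda_{j_{k}}^{2}$ in the statement (and the value $\det\Lambda^{2}$ displayed in the paper's own proof) are inconsistent with this computation, so the discrepancy you flag lies in the source, not in your derivation. (Minor slips on your side: the planar blobs have area $\pi\hbar$, not $\frac{\pi\hbar}{2}$, and the John ellipse of $[-a_{j},a_{j}]\times[-b_{j},b_{j}]$ has area $\pi a_{j}b_{j}$; neither affects your covariance data or the purity.)
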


\begin{proof}
\textit{(i)} Let us determine the quantum blobs $\Omega=S(\mathcal{B}%
^{2n}(\sqrt{\hbar}))$ ($S\in\operatorname*{Sp}(n)$) contained in $X\times P$
and orthogonally projecting onto $X$ and $P$. These will determine the
functions $\psi$ we are looking for by the same procedure as in Theorem
\ref{Thm2} via their covariance matrix $\Sigma$. Setting $M=\frac{\hbar}%
{2}\Sigma^{-1}$ the condition $\Omega=S(\mathcal{B}^{2n}(\sqrt{\hbar}))$ is
equivalent to $M\in\operatorname*{Sp}(n)$, $M>0$. The symplecticity of $M$
allows us to easily invert $\Sigma$ and one finds, using (\ref{ABCD1}) and
(\ref{ABCD2}),
\[
\Sigma=%
\begin{pmatrix}
\Sigma_{XX} & \Sigma_{XP}\\
\Sigma_{PX} & \Sigma_{PP}%
\end{pmatrix}
=\frac{\hbar}{2}%
\begin{pmatrix}
M_{PP} & -M_{PX}\\
-M_{XP} & M_{XX}%
\end{pmatrix}
~.
\]
The orthogonal projection $\Omega_{X}$ is given by the inequality
$(M/M_{PP})x^{2}\leq\hbar$ (Lemma \ref{LemmaProj}), that is, taking again the
equalities (\ref{ABCD1}) and (\ref{ABCD2}) into account and using the fact
that $M_{XX}$, $M_{PP}>0$ and $M_{XP}^{T}=M_{PX}$,
\[
M/M_{PP}=(M_{XX}M_{PP}-M_{XP}M_{PP}^{-1}M_{PX}M_{PP})M_{PP}^{-1}=M_{PP}%
^{-1}~.
\]
By a similar argument we get $M/M_{XX}=M_{XX}^{-1}$ hence the equalities
\begin{equation}
A=M/M_{PP}=\frac{\hbar}{2}\Sigma_{XX}^{-1}\text{ \ \textit{and} \ }%
B=M/M_{XX}=\frac{\hbar}{2}\Sigma_{PP}^{-1}~. \label{ABM}%
\end{equation}
It follows that the orthogonal projections $\Omega_{X}$ and $\Omega_{P}$ are
the ellipsoids
\[
\Omega_{X}=\{x:\tfrac{1}{2}\Sigma_{XX}^{-1}x^{2}\leq1\}\text{ \ },\text{
\ }\Omega_{P}=\{p:\tfrac{1}{2}\Sigma_{PP}^{-1}p^{2}\leq1\}~.
\]
We next determine all the Gaussian states $\phi_{WY}$ having $\Omega$ as
covariance matrix. As in the proof of \ref{Thm2}, we have to solve the matrix
equation%
\[%
\begin{pmatrix}
W+YW^{-1}Y & YW^{-1}\\
W^{-1}Y & W^{-1}%
\end{pmatrix}
=%
\begin{pmatrix}
M_{XX} & M_{XP}\\
M_{PX} & M_{PP}%
\end{pmatrix}
~.
\]
The solutions are (\textit{cf.} formulas (\ref{AY})) $W=\frac{\hbar}{2}%
\Sigma_{XX}^{-1}$ and $Y=-\Sigma_{XP}\Sigma_{XX}^{-1}$ corresponding to the
Gaussian pure state
\begin{multline*}
\phi_{WY}(x)=\left(  \tfrac{1}{2\pi}\right)  ^{n/4}(\det\Sigma_{XX})^{-1/4}\\
\exp\left[  -\left(  \frac{1}{4}\Sigma_{XX}^{-1}+\frac{i}{2\hbar}\Sigma
_{XP}\Sigma_{XX}^{-1}\right)  x\cdot x\right]
\end{multline*}
where $\Sigma_{XP}$ is any matrix satisfying condition the matrix version
(\ref{sigpx}) of the RSUP, that is%
\[
\Sigma_{XP}^{2}=\Sigma_{PP}\Sigma_{XX}-\frac{\hbar^{2}}{4}I_{n\times n}~;
\]
Since $\Sigma_{XX}=\frac{\hbar}{2}A^{-1}$ and $\Sigma_{PP}=\frac{\hbar}%
{2}B^{-1}$ (formulas (\ref{ABM}) above) this is%
\[
\Sigma_{XP}^{2}=\frac{\hbar^{2}}{4}(B^{-1}A^{-1}-I_{n\times n})
\]
and we are done. \textit{(ii)}\ We can, as in the proof of Theorem
\ref{Theorem1}, choose an invertible $n\times n$ matrix $L$ such that
\[
L^{T}AL=L^{-1}B(L^{T})^{-1}=\Lambda~.
\]
In view of Lemma \ref{LemmaML} above, replacing $(X,P)$ with
\begin{equation}
X^{\prime}\times P^{\prime}=M_{L}(X\times P)\text{ \ , \ }M_{L}=%
\begin{pmatrix}
L^{-1} & 0\\
0 & L^{T}%
\end{pmatrix}
\label{mlxp}%
\end{equation}
has the effect of replacing the projections $\Omega_{X}$ and $\Omega_{P}$ of
an ellipsoid $\Omega$ with $L^{-1}\Omega_{X}$ and $L^{T}\Omega_{P}$. This
reduces the proof to the case where $X$ and $P$ are replaced with
\begin{equation}
X^{\prime}=\Lambda^{-1/4}\mathcal{B}_{X}^{n}(\sqrt{\hbar})\text{
\ },\text{\ \ }P^{\prime}=\Lambda^{-1/4}\mathcal{B}_{P}^{n}(\sqrt{\hbar})
\label{reduc}%
\end{equation}
where $\Lambda=\operatorname*{diag}(\sqrt{\lambda_{1}},...,\sqrt{\lambda_{n}%
})$, the $\lambda_{j}$ being the eigenvalues of $AB\leq I_{n\times n}$; the
duality of $X$ and $P$ (and hence of $X^{\prime}$ and $P^{\prime}$) is
equivalent to the conditions $0<\lambda_{j}\leq1$ for $j=1,...,n$ with at
least one of the eigenvalues $\lambda_{j}$ of $AB$ being $<1$ since $X^{\hbar
}\neq P$ implies that $AB\leq I_{n\times n}$, $AB\neq I_{n\times n}$.
Explicitly:
\[
X^{\prime}=\{x:%
{\textstyle\sum_{j=1}^{n}}
\lambda_{j}^{1/2}x_{j}^{2}\leq\hbar\}\text{ },\text{ }P^{\prime}=\{p:%
{\textstyle\sum_{j=1}^{n}}
\lambda_{j}^{1/2}p_{j}^{2}\leq\hbar\}~.
\]
Now, the John ellipsoid of $X^{\prime}\times P^{\prime}$ is
\[
\Omega_{\max}^{\prime}=\{(x,p):%
{\textstyle\sum_{j=1}^{n}}
\lambda_{j}^{1/2}(x_{j}^{2}+p_{j}^{2})\leq\hbar\}
\]
and the associated covariance matrix is
\[
\Sigma_{\max}^{\prime}=\frac{\hbar}{2}%
\begin{pmatrix}
\Lambda^{-1/2} & 0\\
0 & \Lambda^{-1/2}%
\end{pmatrix}
~.
\]
The purity (\ref{purity}) of the associated Gaussian state $\widehat{\rho}$
is
\[
\mu(\widehat{\rho})=\left(  \frac{\hbar}{2}\right)  ^{n}(\det\Sigma_{\max
}^{\prime})^{-1/2}=\det\Lambda^{2}%
\]
and this is the square of the product of the eigenvalues of $AB$ that are
smaller than one.
\end{proof}

\section{The Mahler Volume and Related Topics}

In this section we briefly discuss some related topics where quantum polar
duality can also be seen to appear, sometimes unexpectedly. Particularly
interesting is the link between quantum mechanics and a well-known conjecture
from convex geometry, the \textit{Mahler conjecture}. Perhaps the
Donoho--Stark uncertainty principle which is discussed thereafter might shed
some new light on this difficult problem.

\subsection{The Mahler conjecture\label{secmahl}}

\subsubsection{Some known results}

Let $X$ be a convex body in $\mathbb{R}_{x}^{n}$ (i.e. $X$ is compact and has
non-empty interior). We assume that $X$ contains $0$ in its interior. By
definition, the Mahler volume \cite{Mahler} of $X$ is the product%
\begin{equation}
\upsilon(X)=|X|~|X^{\hbar}| \label{Mahler}%
\end{equation}
where $|X|$ is the usual Euclidean volume on $\mathbb{R}_{x}^{n}$. The Mahler
volume is a dimensionless quantity because of its rescaling invariance 8see
below): we have $\upsilon(\lambda X)=\upsilon(X)$ for all $\lambda>0$.

The Mahler volume is invariant under linear automorphisms of $\mathbb{R}%
_{x}^{n}$: if $L$ is an automorphism of $\mathbb{R}_{x}^{n}$ then we have, in
view of the scaling formula (\ref{scaling}),%
\begin{equation}
\upsilon(LX)=|LX|~|(L^{T})^{-1}X^{\hbar}|=|X|~|X^{\hbar}|~. \label{mahlerinv}%
\end{equation}
It follows that the Mahler volume of an arbitrary ellipsoid $X=\{x:Ax^{2}\leq
R^{2}\}$ ($A>0$) is given by
\begin{equation}
\upsilon(X)=|\mathcal{B}^{n}(\sqrt{\hbar})|~|\mathcal{B}^{n}(\sqrt{\hbar
})^{\hbar}|=\frac{(\pi\hbar)^{n}}{\Gamma(\frac{n}{2}+1)^{2}} \label{volxvolxh}%
\end{equation}
and is thus the same for all ellipsoids. It turns out that the Mahler volume
of ellipsoids is maximal, in the sense that we have
\begin{equation}
\upsilon(X)\leq\frac{(\pi\hbar)^{n}}{\Gamma(\frac{n}{2}+1)^{2}} \label{BS}%
\end{equation}
for all symmetric convex bodies, with equality occurring if and only if $X$ is
an ellipsoid. This result is due to Blaschke \cite{Blaschke} for $n=2,3$ and
to Santal\'{o} \cite{Santalo} for arbitrary $n$ (see Schneider
\cite{Schneider}).

The problem of finding a lower bound for the Mahler volume is much more
difficult and a general solution is unknown. A famous conjecture, due to
Mahler himself \cite{Mahler}, says that for every symmetric convex body $X$ in
$\mathbb{R}_{x}^{n}$ we have
\begin{equation}
\upsilon(X)\geq\frac{(4\hbar)^{n}}{n!} \label{volvo3}%
\end{equation}
with equality only when $X$ is the hypercube $C=[-1,1]^{n}$. In view of the
invariance property (\ref{mahlerinv}) this is tantamount to saying that the
minimum is attained by any $n$-parallelepiped%
\begin{equation}
X=[-\sqrt{2\sigma_{x_{1}x_{1}}},\sqrt{2\sigma_{x_{1}x_{1}}}]\times\cdot
\cdot\cdot\times\lbrack-\sqrt{2\sigma_{x_{n}x_{n}}},\sqrt{2\sigma_{x_{n}x_{n}%
}}] \label{interval}%
\end{equation}
which is the $n$-dimensional generalization of the interval $\Omega_{X}$
(\ref{intervals}) of the introduction. While the conjectured inequality
(\ref{volvo3})\ trivially holds when $n=1$ (since $\upsilon(X)$ is just the
area of the rectangle $X\times X^{\hbar}$), a proof in the general case is
still lacking at the time of writing. Bourgain and Milman \cite{BM} have shown
the existence, for every $n\in\mathbb{N}$, of a constant $C_{n}>0$ such that
\begin{equation}
|X|~|X^{\hbar}|\geq C_{n}\hbar^{n}/n! \label{BM}%
\end{equation}
and more recently Kuperberg \cite{Kuper} has shown that one can choose
$C_{n}=(\pi/4)^{n}$, so that (\ref{BM}) can be rewritten%
\begin{equation}
\upsilon(X)\geq\frac{(\pi\hbar)^{n}}{4^{n}n!} \label{kuper}%
\end{equation}
and this is the best known lower bound for the Mahler volume. Summarizing, we
have the bounds%
\begin{equation}
\frac{(\pi\hbar)^{n}}{4^{n}n!}\leq\upsilon(X)\leq\frac{(\pi\hbar)^{n}}%
{\Gamma(\frac{n}{2}+1)^{2}}~. \label{bounds}%
\end{equation}

One geometric meaning of the Mahler volume is that it captures the
\textquotedblleft roundness\textquotedblright\ of a convex body, with
ellipsoids being the roundest, and cubes and octahedra being the
\textquotedblleft pointiest\textquotedblright\ \cite{Tao}. It is clear that
this lower bound -- ideally, the conjectured bound $\upsilon(X)\geq
(4\hbar)^{n}/n!$ -- is a form of the uncertainty principle. But what does it
tell us?

\subsubsection{Mahler volume and symplectic capacity}

We know that the notion of symplectic capacity is closely related to the
uncertainty principle. There is an important inequality relating the
symplectic capacity of a symmetric convex body $K$ to its volume. It is the
so-called \textit{symplectic isoperimetric inequality} \cite{aa,arkaos13}
which says that%
\begin{equation}
\frac{c_{\min}(K)}{c_{\min}(\mathcal{B}^{2n}(1))}\leq\left(  \frac
{|K|}{|\mathcal{B}^{2n}(1)|}\right)  ^{1/n} \label{iso1}%
\end{equation}
where $c_{\min}$ is the Gromov width; in other words%
\begin{equation}
c_{\min}(K)\leq(n!)^{1/n}|K|^{1/n}~. \label{iso2}%
\end{equation}
The proof of (\ref{iso1})--(\ref{iso2}) is quite simple: let $\mathcal{B}%
^{2n}(r)$ be the largest phase space ball that can be embedded in $K$ using a
canonical transformation, thus $c_{\min}(\Omega)=\pi r^{2}$. Since canonical
transformations are volume preserving we have also $|K|\geq|\mathcal{B}%
^{2n}(r)|$ hence the inequality $\mathcal{B}^{2n}(r)$ follows by a direct
calculation. Since all symplectic capacities agree on ellipsoids the
inequality (\ref{iso1}) still holds when $K$ is an ellipsoid and $c_{\min}$ is
replaced with any symplectic capacity $c$. It is conjectured
(\textquotedblleft Viterbo's conjecture\textquotedblright) that (\ref{iso1})
actually holds for \emph{all} convex bodies and \emph{all} symplectic
capacities:%
\begin{equation}
c(K)\leq(n!)^{1/n}|K|^{1/n} \label{conjecture}%
\end{equation}
(see \cite{arkaos13} for details and references). Quite surprisingly, this
inequality implies the Mahler conjecture. In fact, if (\ref{conjecture})
holds, then we may choose $c=c_{\max}$ and hence, by formula (\ref{yaron}) in
Theorem \ref{Theorem1},%
\[
4\hbar=c_{\max}(X\times X^{\hbar})\leq(n!)^{1/n}|X\times X^{\hbar}|^{1/n}%
\]
that is $\upsilon(X)\geq(4\hbar)^{n}/n!$, which is the inequality
(\ref{volvo3}) conjectured by Mahler.

\subsection{Hardy's Uncertainty Principle}

Let $\psi\in L^{2}(\mathbb{R})$, $||\psi||_{L^{2}}\neq0$. Hardy's uncertainty
principle \cite{Hardy} in its original form states that we cannot have
simultaneously
\begin{equation}
|\psi(x)|\leq Ce^{-ax^{2}/2\hbar}\text{ \ },\text{ \ }|\widehat{\psi}(p)|\leq
Ce^{-bp^{2}/2\hbar} \label{H1}%
\end{equation}
($a,b,C$ positive constants) unless $ab\leq1$ and \textit{(i)} if $ab=1$ then
$\psi(x)=\alpha e^{-ax^{2}/2\hbar}$ for some $\alpha\in\mathbb{C}$ and
\textit{(ii)} if $ab<1$ then $\psi$ is a finite linear combination of
conveniently rescaled Hermite functions.

In the multidimensional case Hardy's uncertainty principle can be stated as
follows \cite{golu09}: Let $A$ and $B$ be positive definite and symmetric
matrices and $\psi\in L^{2}(\mathbb{R}^{n})$, $||\psi||_{L^{2}}\neq0$. The
Hardy inequalities
\begin{equation}
|\psi(x)|\leq Ce^{-\tfrac{1}{2\hbar}Ax^{2}}\text{ \ and \ }|\widehat{\psi
}(p)|\leq Ce^{-\tfrac{1}{2\hbar}Bp^{2}} \label{AB}%
\end{equation}
are satisfied for some constant $C>0$ if and only if $AB\leq I_{n\times n}$,
that is,
\begin{equation}
\text{\textit{The eigenvalues} }\lambda_{1},...,\lambda_{n}\text{ \textit{of}
}AB\text{ \textit{are}}\leq1 \label{eigen1}%
\end{equation}
and we have:

(i) \textit{If} $\lambda_{j}=1$ \textit{for all} $j$\textit{, then}
$\psi(x)=\alpha e^{-\frac{1}{2\hbar}Ax^{2}}$ \textit{for some constant
}$\alpha\in\mathbb{C}$;

(ii) \textit{If} $\lambda_{j}<1$ \textit{for at least one index} $j$\textit{,}
\textit{then the set of functions satisfying} (\ref{AB})\textit{ is an
infinite-dimensional subspace of }$L^{2}(\mathbb{R}^{n})$\textit{.}

In view of property (\ref{Dual}) the conditions (\ref{eigen1}) mean that the
ellipsoids
\[
X_{A}=\{x:Ax^{2}\leq\hbar\}\text{ \textit{and} \ }P_{B}=\{p:Bp^{2}\leq\hbar\}
\]
form a dual quantum pair $(X_{A},P_{B})$. If this pair is saturated
(\textit{i.e.} $P_{B}=X_{A}^{\hbar}$), then $\psi$ is a scalar multiple of the
Gaussian $\phi_{AY}$. Consider now the \textquotedblleft Hardy
ellipsoid\textquotedblright\
\[
\Omega_{AB}=\{(x,p):Ax^{2}+Bp^{2}\leq\hbar\}
\]
that is
\[
\Omega_{AB}=\{z:M_{AB}z^{2}\leq\hbar\}\text{ \ , \ }M_{AB}=%
\begin{pmatrix}
A & 0\\
0 & B
\end{pmatrix}
~.
\]
The orthogonal projections on $\mathbb{R}_{x}^{n}$ and $\mathbb{R}_{p}^{n}$ of
$\Omega_{AB}$ are precisely the ellipsoids $X_{A}$ and $P_{B}$. The symplectic
eigenvalues of $M_{AB}$ are the positive numbers $\nu_{1},...,\nu_{n}$ such
that $\pm i\nu_{1},...,\pm\nu_{n}$ are the solutions of the characteristic
polynomial $P(t)=\det(t^{2}I_{n\times n}+AB)$ of $M$. These are the pure
imaginary numbers $\pm i\sqrt{\lambda_{1}},...,\pm i\sqrt{\lambda_{n}}$ where
the $\lambda_{j}>0$ are the eigenvalues of $AB$. Thus $\nu_{j}=\sqrt
{\lambda_{j}}$ for $1\leq j\leq n$. Since we have $\lambda_{j}\leq1$ for all
$j$ the covariance matrix $\Sigma_{AB}=\frac{\hbar}{2}M_{AB}^{-1}$ satisfies
the quantum condition\ $\Sigma_{AB}+\frac{i\hbar}{2}J\geq0$; equivalently
(\ref{foop}): $c(\Omega_{AB})\geq\pi\hbar$. If, in particular, the
$\lambda_{j}$ are all equal to one we have $c(\Omega_{AB})=\pi\hbar$ and
$AB=I_{n\times n}$ so that $P_{B}=X_{A}^{\hbar}$. Let us examine this case a
little bit closer at the light of the reconstruction Theorems above. Assume
that $\psi\in L^{2}(\mathbb{R}^{n})$, $||\psi||_{L^{2}}\neq0$, and its Fourier
transform satisfy
\begin{equation}
|\psi(x)|\leq Ce^{-\tfrac{1}{2\hbar}Ax^{2}}\ \text{\textit{and} }%
\ |\widehat{\psi}(p)|\leq Ce^{-\tfrac{1}{2\hbar}A^{-1}p^{2}} \label{esta}%
\end{equation}
for some constant $C>0$. The ellipsoids $X_{A}$ and $P_{A^{-1}}$ are polar
dual of each other hence Theorem \ref{Thm2} tells us that
\[
\psi(x)=\left(  \tfrac{1}{\pi\hbar}\right)  ^{n/4}(\det A)^{1/4}e^{-\tfrac
{1}{2\hbar}Ax\cdot x}~.
\]
The Fourier transform of $\psi$ is given (up to a constant factor with modulus
one) by
\[
\widehat{\psi}(p)=\left(  \tfrac{1}{\pi\hbar}\right)  ^{n/4}(\det A^{-}%
)^{1/4}e^{-\tfrac{1}{2\hbar}A^{-1}x\cdot x}\phi_{W^{\prime}Y^{\prime}}(p)
\]
\ (formula (\ref{fayf})) and the inequalities (\ref{esta}) are satisfied since
we have%
\[
A(A+YA^{-1}Y)^{-1}\leq I_{n\times n}~.
\]
A similar argument allows to to study the general case $AB\leq I_{n\times n}$
using Theorem \ref{Thm2}. Hardy's uncertainty principle thus appears as being
a particular case of the reconstruction theorems we have proven, and which are
themselves based on the notion of quantum polar duality.

\subsection{Donoho and Stark's uncertainty principle}

As we mentioned in the introduction, Hilgevoord and Uffink emphasized in
\cite{hi02,hiuf85bis} that standard deviations only give adequate measurements
of the spread for Gaussian states. A good candidate for a more general theory
of indeterminacy is to define an uncertainty principle using the notion of
concentration of a state. It turns out that Donoho and Stark \cite{dost} have
proven a concentration result for a function and its Fourier transform which
can be viewed in a sense as a variant of Hardy's uncertainty principle; as we
will see it can also be interpreted in terms of quantum polar duality and is
related to the Mahler volume. Let $X\subset\mathbb{R}_{x}^{n}$ be a measurable
set and let $\overline{X}=\mathbb{R}_{x}^{n}\setminus X$ be its complement
(convexity is not assumed here). We will say that a function $\psi\in
L^{2}(\mathbb{R}^{n})$ is $\varepsilon$-concentrated on $X$ if we have
\begin{equation}
\left(  \int_{\overline{X}}|\psi(x)|^{2}dx\right)  ^{1/2}\leq\varepsilon
||\psi||_{L^{2}}~. \label{dos1}%
\end{equation}
If $||\psi||_{L^{2}}=1$, which we assume from now on, this is equivalent to
the inequality
\begin{equation}
\int_{\overline{X}}|\psi(x)|^{2}dx\leq\varepsilon^{2}~. \label{dos2}%
\end{equation}
The Donoho--Stark uncertainty principle says that if the normalized function
$\psi\in L^{2}(\mathbb{R}_{x}^{n})$ is $\varepsilon_{X}$-concentrated on $X$
and its Fourier transform $\widehat{\psi}$ is $\varepsilon_{P}$-concentrated
of $P$, that is
\begin{equation}
\int_{\overline{X}}|\psi(x)|^{2}dx\leq\varepsilon_{X}^{2}\text{ \ },\text{
}\int_{\overline{P}}|\widehat{\psi}(p)|^{2}dp\leq\varepsilon_{P}^{2}
\label{ksiksi}%
\end{equation}
then we must have
\begin{equation}
|X|~|P|\geq(2\pi\hbar)^{n}(1-\varepsilon_{X}-\varepsilon_{P})^{2} \label{DS}%
\end{equation}
for $\varepsilon_{X}+\varepsilon_{P}<1$. Taking $P=X^{\hbar}$ this shows in
particular that the Mahler volume of $X$ satisfies
\[
\upsilon(X)\geq(2\pi\hbar)^{n}(1-\varepsilon_{X}-\varepsilon_{X^{\hbar}}%
)^{2}~.
\]

Let us apply the estimate above to the dual pair $(X,X^{\hbar})$ of centrally
symmetric convex bodies. We have the following remarkable result relating the
Donoho--Stark UP and the Mahler volume:

\begin{theorem}
Let $X$ be a symmetric convex measurable body in $\mathbb{R}_{x}^{n}$ and
$\psi\in L^{2}(\mathbb{R}_{x}^{n})$, $||\psi||_{L^{2}}=1$. Assume that $\psi$
is $\varepsilon_{X}$-concentrated in $X$ and $\widehat{\psi}$ is
$\varepsilon_{X^{\hbar}}$-concentated in $X^{\hbar}$ with $\varepsilon
_{X}+\varepsilon_{X^{\hbar}}\leq1$. Then we must have%
\begin{equation}
1\geq\varepsilon_{X}+\varepsilon_{X^{\hbar}}\geq1-\frac{1}{2^{n/2}\Gamma
(\frac{n}{2}+1)} \label{good}%
\end{equation}
that is $\varepsilon_{X}+\varepsilon_{X^{\hbar}}\rightarrow1$ as
$n\rightarrow\infty$.
\end{theorem}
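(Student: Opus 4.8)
The plan is to combine two facts already recorded in the excerpt that push in opposite directions on the Mahler volume $|X|\,|X^{\hbar}|$: the Donoho--Stark concentration estimate (\ref{DS}) and the Blaschke--Santal\'{o} inequality (\ref{BS}). The hypotheses of the theorem are precisely what is needed to apply the Donoho--Stark principle with the choice $P=X^{\hbar}$: $\psi$ is normalized and $\varepsilon_{X}$-concentrated on the measurable set $X$, while $\widehat{\psi}$ is $\varepsilon_{X^{\hbar}}$-concentrated on $X^{\hbar}$, which is measurable since it is convex. Hence (\ref{DS}) gives
\[
|X|\,|X^{\hbar}|\geq(2\pi\hbar)^{n}(1-\varepsilon_{X}-\varepsilon_{X^{\hbar}})^{2}.
\]

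For the upper bound I would use that $X$ is a symmetric convex body, so the Blaschke--Santal\'{o} inequality (\ref{BS}) applies and yields $\upsilon(X)=|X|\,|X^{\hbar}|\leq(\pi\hbar)^{n}/\Gamma(\tfrac{n}{2}+1)^{2}$. Chaining the two estimates,
\[
(2\pi\hbar)^{n}(1-\varepsilon_{X}-\varepsilon_{X^{\hbar}})^{2}\leq\frac{(\pi\hbar)^{n}}{\Gamma(\tfrac{n}{2}+1)^{2}},
\]
that is $(1-\varepsilon_{X}-\varepsilon_{X^{\hbar}})^{2}\leq 2^{-n}\Gamma(\tfrac{n}{2}+1)^{-2}$. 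Since by assumption $\varepsilon_{X}+\varepsilon_{X^{\hbar}}\leq1$, the quantity $1-\varepsilon_{X}-\varepsilon_{X^{\hbar}}$ is nonnegative, so I may take positive square roots to get $1-\varepsilon_{X}-\varepsilon_{X^{\hbar}}\leq 2^{-n/2}/\Gamma(\tfrac{n}{2}+1)$, which is the lower bound in (\ref{good}); the upper bound $\varepsilon_{X}+\varepsilon_{X^{\hbar}}\leq1$ is the hypothesis. The asymptotic claim then follows because $\Gamma(\tfrac{n}{2}+1)$ grows faster than $2^{n/2}$, so $2^{-n/2}/\Gamma(\tfrac{n}{2}+1)\to0$ as $n\to\infty$.

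There is no real obstacle here: the entire content is the observation that Donoho--Stark forces $|X|\,|X^{\hbar}|$ to be large (of order $(2\pi\hbar)^{n}$ whenever the concentration defect is bounded away from $0$), whereas for symmetric convex bodies the Santal\'{o} bound makes $|X|\,|X^{\hbar}|$ shrink like $\Gamma(\tfrac{n}{2}+1)^{-2}$, and the two are incompatible unless the defect $1-\varepsilon_{X}-\varepsilon_{X^{\hbar}}$ tends to $0$. The only minor points worth noting are the measurability of $X^{\hbar}$ (automatic, $X^{\hbar}$ being convex) and the borderline case $\varepsilon_{X}+\varepsilon_{X^{\hbar}}=1$, in which (\ref{DS}) is vacuous but the asserted inequality still holds trivially because its right-hand side does not exceed $1$.
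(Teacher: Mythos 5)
Your proposal is correct and follows essentially the same route as the paper: apply the Donoho--Stark inequality (\ref{DS}) with $P=X^{\hbar}$, bound the Mahler volume from above by the Blaschke--Santal\'{o} inequality (\ref{BS}), and chain the two to obtain (\ref{good}). The additional remarks on measurability of $X^{\hbar}$ and the borderline case $\varepsilon_{X}+\varepsilon_{X^{\hbar}}=1$ are fine but not needed beyond what the paper already does.
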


\begin{proof}
Combining the Blaschke--Santal\'{o} estimate (\ref{BS}) for the Mahler volume
$\upsilon(X)=|X|~|X^{\hbar}|$ and the Donoho--Stark inequality (\ref{DS}) we
get%
\[
\frac{(\pi\hbar)^{n}}{\Gamma(\frac{n}{2}+1)^{2}}\geq(2\pi\hbar)^{n}%
(1-\varepsilon_{X}-\varepsilon_{X^{\hbar}})^{2}%
\]
and hence%
\[
0\leq1-\varepsilon_{X}-\varepsilon_{X^{\hbar}}\leq\frac{1}{2^{n/2}\Gamma
(\frac{n}{2}+1)}%
\]
which is (\ref{good}).
\end{proof}

If the Mahler volume of $X$ satisfies the equality
\[
\upsilon(X)=(2\pi\hbar)^{n}(1-\varepsilon_{X}-\varepsilon_{X^{\hbar}})^{2}%
\]
then we must have
\begin{equation}
1-\frac{1}{2^{n/2}\Gamma(\frac{n}{2}+1)}\leq\varepsilon_{X}+\varepsilon
_{X^{\hbar}}\leq1-\frac{1}{8^{n/2}n!^{1/2}}~. \label{doublegood}%
\end{equation}
This follows from the estimate (\ref{bounds}) for the Mahler volume.

These estimates show that when the number of degrees of freedom $n$ is large,
the sum $\varepsilon_{X}+\varepsilon_{X^{\hbar}}$ of the concentrations of a
wavefunction and of its Fourier transform is practically equal to one. If the
Mahler conjecture is true, then (\ref{doublegood}) may be replaced with
\begin{equation}
1-\frac{1}{2^{n/2}\Gamma(\frac{n}{2}+1)}\leq\varepsilon_{X}+\varepsilon
_{X^{\hbar}}\leq1-\frac{2}{(2\pi)^{n/2}(n!)^{1/2}}~.
\end{equation}
For example, if $n=6$ (which corresponds to a system of two particles moving
in physical space) we will have $0.979<\varepsilon_{X}+\varepsilon_{X^{\hbar}%
}<0.999$.

\begin{acknowledgement}
This work has been financed by the Grant P 33447 of the Austrian Research
Agency FWF. It is my pleasure to extend my thanks to Basil Hiley and Glen
Dennis for useful comments and for having pointed out typos. I also express my
gratitude\ to the Reviewer for very useful remarks.
\end{acknowledgement}

\end{document}